\definecolor{bg}{rgb}{0.95,0.95,0.95}
\newcommand{\caml}[1]{\smash{\mintinline{ocaml}{#1}}}
\newcommand{\ind}[0]{\ensuremath{\mathtt{Ind}}}
\newcommand{\sep}[0]{\ensuremath{\mathtt{Sep}}}
\newcommand{\deepsep}[0]{\ensuremath{\mathtt{Deepsep}}}
\renewcommand{\vec}[1]{\ensuremath{\overline{#1}}}
\newcommand{\of}[0]{\ensuremath{\text{ of }}}
\newcommand{\guard}[2]{#2 \restriction #1}
\newcommand{\tyeq}[2]{(#1 = #2)}
\newcommand{\bnfeq}{\mathrel{::=}\;}
\newcommand{\bnfor}{\mathrel{\vert}}
\newcommand{\mcompsymbol}{\circ}
\newcommand{\mcomp}[2]{#1 \mathrel{\mcompsymbol} #2}
\newcommand{\subsymbol}{\triangleleft}
\newcommand{\sub}[2]{#1 \mathrel{\subsymbol} #2}
\newcommand{\vdashdecl}{\vdash_{\mathsf{decl}}}
\newcommand{\builtin}[1]{\mathtt{#1}}
\newtheorem{definition}{Definition}
\newtheorem{theorem}{Theorem}
\newtheorem{conjectured_theorem}[theorem]{Conjectured Theorem}
\newtheorem{lemma}{Lemma}
\newtheorem{conjectured_lemma}[lemma]{Conjectured Lemma}
\newtheorem{fact}{Fact}
\newtheorem{corollary}{Corollary}
\setlist[itemize]{noitemsep, topsep=2pt}
\title{Unboxing Mutually Recursive Type Definitions in OCaml}
\titlerunning{Unboxing Mutually Recursive Type Definitions}
\author{
  Simon Colin 
  \and
  Rodolphe Lepigre\inst{1}
  \and
  Gabriel Scherer\inst{2}
}
\authorrunning{Colin, Lepigre and Scherer}
\institute{
  Inria, LSV, CNRS, ENS Paris-Saclay, France --
  \email{rodolphe.lepigre@inria.fr}
  \and
  Inria, France --
  \email{gabriel.scherer@inria.fr}
}
\begin{document}

\maketitle

\begin{center} October 2018 \end{center}

\begin{abstract}
  In modern OCaml, single-argument datatype declarations
  (variants with a single constructor, records with a single
  immutable field) can sometimes be ``unboxed''. This means that their
  memory representation is the same as their single argument, omitting
  an indirection through the variant or record constructor, thus
  achieving better memory efficiency.
  However, in the case of generalized/guarded algebraic datatypes
  (GADTs), unboxing is not always possible due to a subtle assumption about
  the runtime representation of OCaml values. The current correctness
  check is incomplete, rejecting many valid definitions, in particular
  those involving mutually-recursive datatype declarations.
  In this paper, we explain the notion of \emph{separability} as
  a semantic for the unboxing criterion, and propose a set of
  inference rules to check separability. From these inference rules, we
  derive a new implementation of the unboxing check that properly
  supports mutually-recursive definitions.
\end{abstract}

\abovedisplayskip .5em

\section{Introduction}


Version 4.04 of the OCaml programming language, released in
November 2016, introduced the possibility to \emph{unbox}
single-constructor variants and single-immutable-field records. In other words,
a value inhabiting such a datatype is exactly represented at runtime as
the value that it contains, rather than as a pointer to a \emph{block}
containing a tag for the constructor and the contained value. The
removal of this indirection is called \emph{constructor unboxing}, or
\emph{unboxing} for short, and it allows for a slight improvement in
speed and memory usage.

In the current version of OCaml, unboxing must be explicitly requested
with \caml{[@@unboxed]} as shown in the following example:
\begin{minted}{ocaml}
type uid = UId of int [@@unboxed]
\end{minted}

One of the main interests of unboxing is that it allows the
incorporation of semantic type distinctions without losing runtime
efficiency -- the mythical zero-cost abstraction. For example, the
elements of \caml{uid} are distinct from those of \caml{int} for the
type checker, but they have the same runtime representation. Unboxing
resolves a tension between software engineering and performance.

Unboxing becomes even more interesting when it is combined with
advanced features such as existential types, with GADTs, or
higher-rank polymorphism, with polymorphic record fields, which are
otherwise only accessible in boxed form.
\begin{minted}{ocaml}
type 'a data = { name : string ; data : 'a }
type any_data = Any_data : 'a data -> any_data [@@unboxed]

type proj = { proj : 'a. 'a -> 'a -> 'a } [@@unboxed]
\end{minted}

\subsection{Unboxing and dynamic floating-point checks}

OCaml uses a uniform memory representation, one machine word for all
values. Multi-word data such as floating-point numbers, records or
arrays are represented by a word-sized pointer to a block on the
heap.

For local computations involving floating-point numbers, the compiler
tries to optimize by storing short-lived floating-point values
directly, without the pointer indirection; this is called
\emph{floating-point unboxing}, a different form of unboxing
optimization. The boxing indirection needs to be kept for values
passed across function boundaries, or passed through data structures
that expect generic OCaml values.

To avoid indirection costs on typical numeric computations, a specific
representation exists for floating-point arrays, in which
floating-point numbers are stored unboxed, as two consecutive words in
memory. The compiler uses this optimized representation when an array
is statically known to have type \caml{float array}. It also performs
a dynamic optimization when creating a new non-empty array: it
checks whether its first element is a (boxed) floating-point value,
and in that case uses the special floating-point array
representation. As a consequence, all writes to such an array need to
be unboxed first, and reads produce data in already-unboxed form,
which meshes well with the local floating-point unboxing
optimizations.

However, this optimization crucially relies on an underlying
assumption: the inhabitants, the values of any given type are either
all boxed floating-point values, or none of them are. We don't know of
a standard name for this property, so we call it \emph{separability}.

Indeed, if there were non-separable types, containing both
floating-point and non-floating-point values, then this optimization
would be unsound.  This is demonstrated by the following example,
which relies on the unsafe/forbidden casting function
\caml{Obj.magic}.
\begin{minted}{ocaml}
let despicable : float array = [| 0.0 ; Obj.magic 42 |]
(* Produces: "segmentation fault (core dumped)" *)
\end{minted}
The array of floating point numbers that is constructed here is stored
using the optimized representation. As a consequence, its elements
must be unboxed prior to being inserted into the actual array. Here,
the segmentation fault is precisely triggered when attempting to unbox
the value \caml{42}, which is not stored in a block but as an
immediate memory word: dereferencing it as a float pointer accesses
forbidden memory.

Although the above example requires unsafe features, a similar situation
also arises when using a single-constructor GADT whose parameter is
existentially quantified, and can hence contain either floating-point or
non-floating-point values.
\begin{minted}{ocaml}
type any = Any : 'a -> any
\end{minted}
The above datatype is the GADT formulation of the existential type
$\exists \alpha.\,\alpha$: it can contain any OCaml value. However,
those values are ``boxed'' under the \caml{Any} constructor. In
particular, the value \caml{Any 0.0} is \emph{not} directly
represented as a floating-point value, but as pointer to a block
stored on the heap that is tagged with the \caml{Any} constructor,
following by (the OCaml representation of) the floating-point number.

If the \caml{any} datatype were allowed to be unboxed, we would have
constructed a type breaking the separability assumption. The above example
of segmentation fault could then be reproduced by a well-typed program
as follows.
\begin{minted}{ocaml}
type any = Any : 'a -> any [@@unboxed]
(* The above type is rightfully rejected by OCaml, it cannot be unboxed. *)

let array = [| Any 0.0 ; Any 42 |]
\end{minted}

The current implementation correctly rejects this \caml{any} datatype, but it
also rejects valid unboxed definitions that would not introduce
non-separable types, when mutually-recursive datatypes are involved. The
following real-world example of an interesting definition that is wrongly
rejected was given by Markus Mottl.\footnote{\url{https://github.com/ocaml/ocaml/pull/606\#issuecomment-248656482}}
\begin{minted}{ocaml}
type (_, _) tree =
  | Root  : { mutable value : 'a; mutable rank : int } -> ('a, [`root ]) tree
  | Inner : { mutable parent : 'a node }               -> ('a, [`inner]) tree

and _ node = Node : ('a, _) tree -> 'a node [@@ocaml.unboxed]
(* The above type is incorrectly rejected by OCaml, it could be unboxed. *)
\end{minted}
Yet another example of such wrongful rejection was encountered
by the second author, in the context of the Bindlib library~\citep*{bindlib}.
It resembles the definitions of \caml{any_type} and \caml{'a data} that were
given earlier, but it is rejected as the two datatypes are defined
mutually-recursively \citep[for more details, see][Section 3.1]{bindlib}.

\subsection{The existing check}
\label{subsec:existing-check}

The existing implementation of the compiler check for
unboxing was implemented by Damien Doligez in 2016, when constructor
unboxing was introduced. It proceeds by inspecting the parameter of every
unboxed GADT constructor. If it is an existential variable \caml{'a}, the
definition is rejected. If it is a type expression whose value
representation is known, such as a function \caml{foo -> bar} or a product
\caml{foo * bar}, then it is separable and the definition is accepted.
The difficult case arises when the parameter is of the form
\caml{(foo, bar, ...) t}. For example, if the parameter has type \caml{'a t},
where \caml{'a} is an existential variable, then this definition must be
rejected if \caml{t} is defined as \caml{type 'a t = 'a}, but it can be
accepted if it is defined as \caml{type 'a t = int * 'a}, for example.

In the current check, \caml{(foo, bar, ...) t} is expanded according to the
definition of \caml{t}, and its expansion is checked recursively. In the case
where \caml{t} is an abstract datatype, the check correctly fails as soon as one
of the parameter is an existential. If \caml{t} is part of the same block of
mutually-recursive definitions, its definition may not be known yet, and the
check fails although it could have succeeded had the definition been known.
Finally, it is worth noting that because of recursive datatypes, the expansion
process may not terminate. As a consequence, there is a hard limit on the
number of expansions.\footnote{This limit was originally set too high, and it
had to be reduced to avoid type-checking slowdowns.}

\subsection{Our approach: inference rules for separability}

We propose to replace the current check using a ``type system'' for
separability. In other words, we introduce inference rules to approximate
the semantic notion of separability: a type is separable if the values it
contains are all floating-point numbers, or if none of them are.
This approach is very similar to how the variance of datatype parameters is
handled in languages with subtyping, including OCaml, both in theory and
in practice. In the case of variance, the ``types'' are annotations such as
\texttt{covariant} or \texttt{contravariant}, and the ``terms'' that
are being checked are types and datatype definitions.

Here, the ``types'' are \emph{separability modes} indicating whether the
corresponding type parameters need to be separable for the whole datatype to be
separable. Modes include $\sep$ (separable), the mode of types or parameters
that must be separable, and $\ind$ (indifferent), the mode of types or
parameters on which no separability constraint is imposed.\footnote{There is
actually a third mode $\deepsep$ (deeply separable) that will be explained in
the next section.} For example, in the case of
\caml{type 'a t = 'a * int}, the parameter \caml{'a} has mode $\ind$ since
the values it contains are all pairs, independently of the type used to
instantiate \caml{'a}. For \caml{type ('a, 'b) second = 'b}, the parameter
\caml{'a} has mode $\ind$, but the parameter \caml{'b} has mode $\sep$.
Indeed, if \caml{'b} is not instantiated with a separable type, then the
whole definition cannot be separable either. The separability behavior of
a parametrized datatype is characterized by a \emph{mode signature}, which gives
a choice of mode for the type parameters that guarantees that the whole datatype
will be separable. For instance, the previous two examples would have mode
signatures \caml{('a:Ind) t} and \caml{('a:Ind, 'b:Sep) second}.

If a type that is being checked is defined in terms of a type constructor
\caml{(foo, bar, ...) t}, and if the mode signature of \caml{t} is known,
then its definition does not need to be unfolded as in the legacy
implementation. Indeed, it is enough to simply check the parameter instances
(\caml{foo}, \caml{bar}, ...) against the modes of the signature. For example,
if we encounter \caml{(foo, bar) second}, then we  only need to check that
\caml{bar} is separable.

In the case of a block of mutually-recursive datatype definitions,
a mode signature must be constructed for each definition of the
block. However, this cannot be done separately for each definition due
to dependencies.  As a consequence, we proceed by computing
a fixpoint, as for variance: we iteratively refine an approximation of
the mode signatures for the block, updating when encountering
conflicts, until a mode signature that requires no update is found: it
is a valid signature for the block.
The number of possible modes assignments for each parameter is finite,
so this fixpoint computation always terminates.
Note that our inference rules do not talk about the fixpoint computation,
or about algorithmic aspects in general. It is a simpler, declarative
specification for the correctness of mode assignments, from which the
checking algorithm can be derived. It can also be used to reason about
the semantic correctness of our check.

\subsection{Contributions}

We claim the following three contributions:
\begin{itemize}
\item A clear explanation of separability. The notion of separability
  evidently existed in Damien Doligez's mind when unboxed datatypes
  where implemented in 2016, but we had to rediscover it to understand
  the check, and we took this opportunity to document separability and
  to give it a precise semantic definition.
\item A set of inference rules for separability of type expressions
  and datatype definitions.
\item An implementation of a separability check derived from these
  inference rules, within the OCaml compiler. It is compatible with
  mutually-recursive definitions, which were previously always rejected.
\end{itemize}
A preliminary version of this work was produced during an internship
of the first author~\citep*{report_colin}. We give here a full
treatment of GADTs, while the internship report used first-class
existential types, without equations. We also moved from a prototype
implementation, separate from the OCaml type-checker and defined on
simpler data-structures, to a production-ready implementation of the
separability check in (an experimental fork of) the OCaml type-checker.

\section{Type language and separability modes}

As mentioned in the introduction, our approach to unboxability checking
relies on the notion of \emph{separability} of a type, which was already
introduced in an intuitive way.
\begin{definition}[Separability]
  A type is said to be separable if and only if its inhabitant contain
  either only floating-point values, or only non-floating point values.
\end{definition}
Intuitively, our final goal is to make sure that all type definitions have
a separable body. That is, for all definition
\caml{type ('a, 'b, ...) t = <type_body>},
we want to check that the type \caml{<type_body>} is separable under
some assumptions on the separability of parameters. Before going into
formal definitions, two potential sources of difficulties must be
discussed: GADTs, which are the only possible source of
non-separability, and the (related) typing constraints, which can be
used to extract possibly non-separable subcomponents of a type.

\subsection{GADTs using type equalities and existential quantifiers}
\label{subsec:gadts-reminder}

Generalized/guarded algebraic datatypes, GADTs for short extend the
usual variant datatypes using a slightly different syntax. They are
parametrized datatypes \caml{(_, _, ...) t} in which the typing
constraints on the parameters may vary depending on the variant
constructor. Moreover, existentially quantified type variables may
appear in GADT constructors.\footnote{See the OCaml manual
  (\url{https://caml.inria.fr/pub/docs/manual-ocaml-4.07/extn.html\#sec252})
  for a more thorough introduction to GADTs.

  We are intentionally unclear about whether the ``G'' means
  ``generalized'' or ``guarded'', because both words have been used in
  the literature. In the present article, ``guarded'' makes more sense
  as we concentrate on the pains introduced by equality constraints,
  or equality \emph{guards}. ``Generalized'' is also a rather empty
  name, as there are many other ways to generalize algebraic
  datatypes.} Examples of GADTs illustrating these features are given
below.
\begin{minted}{ocaml}
type _ data =
  | Char : char -> char data
  | Bool : bool -> bool data

type any_function = Fun : ('a -> 'b) -> any_function

type _ first = First : 'c -> ('c * 'd) first
\end{minted}

As it turns out, GADTs can be decomposed into more primitive
components: non-GADT algebraic datatypes, \emph{type
  equalities} and \emph{existential types}. For instance, the above
examples can be encoded as follows, using an imaginary extension of
the OCaml syntax explained below.\footnote{The translation can be
  defined in a systematic way, see for example
  \citet*{simonet-pottier-hmg-toplas}.}
\begin{minted}{ocaml}
type 'a data =
  | Char of char with ('a = char)
  | Bool of bool with ('a = bool)

type any_function = Fun of exists 'a 'b. 'a -> 'b

type 'a first = First of exists 'b 'c. 'b with ('a = 'b * 'c)
\end{minted}
In the above, we use the new syntax \caml{exists 'a. <type_expr>} for
first-class existential quantification over one or several type
variables, and \caml{<type_expr> with ('a = <type_expr>)} for guarding
a type expression with an equality constraint. In our formal
description of the separability check, we will use the syntax
$\exists \alpha.\tau$ for existentials, and the syntax
$\guard{\tyeq \alpha \kappa}{\tau}$ for equality guards, where
$\alpha$ is a type variable, and $\tau$ and $\kappa$ are type
expressions.

The semantic intuition behind these two type-formers is the following. An
existential type $\exists \alpha.\tau$ can be understood as the union over
all types $\kappa$ of $\tau[\alpha := \kappa]$. For example, the values of
the type
\caml{exists 'a. 'a -> 'a}
contains the values of
\caml{int -> int},
but also the values of
\caml{bool -> bool}
and the values of
\caml{(char -> bool) -> (char -> bool)}.
\footnote{The interpretation of existential quantifiers as unions
  (and dually, of universal quantifiers as intersections) is very
  common in realizability semantics, for example.} An equality guard
$\guard{\tyeq \alpha \kappa}{\tau}$ exactly corresponds to $\tau$ in
the case where the constraint $\tyeq \alpha \kappa$ is satisfied, and
it is empty otherwise. Note that guards whose left-hand-side is a type
variable suffice to express GADTs: all guards equate a type parameter
to its instance in the ``return type'' of the variant constructor.

\subsection{Equality guards and deep separability}
\label{subsec:intro-deepsep}

In the introduction, the parametrized type constructors \caml{'a t}
that were considered either had the mode signature \caml{('a : Ind)
  t}, meaning that \caml{'a t} is always separable no matter what
\caml{'a} is, or they had mode signature \caml{('a : Sep) t}, meaning
that \caml{'a t} is separable only if \caml{'a} is itself
separable. However, these two modes are not always sufficient due to
equality guards.

The equality guards used in GADTs\footnote{Constrained datatype definitions
such as \caml{type 'a t = 'b constraint 'a = 'b * int} may also involve
equality guards, but we chose to ignore this feature here since it poses
exactly the same problem.} introduce the ability for parametrized types to
``peek'' into the definition of their parameters in ways that affect our
separability check.
Consider, for example, the unboxed version of the \caml{_ first} datatype, which
is accepted by the current constructor unboxing check.
\begin{minted}{ocaml}
type _ first =
  | First : 'b -> ('b * 'c) first [@@unboxed]
\end{minted}
Using this definition, \caml{('b * int) first} has the same memory
representation as \caml{'b}, so it is separable if and only if \caml{'b}
is separable. In other words, the separability of type \caml{foo first}
does not  on the separability of type \caml{foo}, but rather
on the separability of a \emph{sub-component},
in the sense of syntactic inclusion, of the type \caml{foo}.

To account for this situation, we introduce a third separability mode
$\deepsep$ (deeply separable). For a closed type expression (a type
expression with no free type variables) to be $\deepsep$, all its
sub-components, including the type expression itself, must be separable.

\subsection{Formal syntax of types}

\begin{figure}[htbp]
  Syntax of type expressions:
  \begin{mathpar}
    \begin{array}{l@{~}r@{~}l@{\quad}l}
      \tau, \kappa
      & \bnfeq & \alpha, \beta & \mbox{type variable} \\
      & \bnfor & \builtin{float} \bnfor \builtin{int} \bnfor \builtin{bool}
               & \mbox{builtin types} \\
      & \bnfor & t(\tau_1, \dots, \tau_n) & \mbox{(parametrized) type constructor} \\
      & \bnfor & \tau \to \kappa & \mbox{function type} \\
      & \bnfor & \tau_1 \times \dots \times \tau_n & \mbox{product/record type} \\
      & \bnfor & \forall \alpha. \tau & \mbox{polymorphic type} \\
      & \bnfor & \exists \alpha. \tau & \mbox{existential type} \\
      & \bnfor & \guard{\tyeq \alpha \kappa}{\tau} & \mbox{equality guard} \\
  \end{array}
  \end{mathpar}
  \medskip

  Syntax of datatypes:
  \begin{mathpar}
    \begin{array}{l@{~}r@{~}l@{\quad}l}
      A, B
      & \bnfeq & C_1 \of \tau_1 \;|\; \dots \;|\; C_n \of \tau_n
      & \mbox{boxed variant} \\
      & \bnfor & C \of \tau \; \mathtt{[@@unboxed]}
      & \mbox{unboxed variant} \\
      & \bnfor & \{ \mathtt{mutable}?~l_1 : \tau_1
                    \;;\; \dots \;;\;
                    \mathtt{mutable}?~l_n : \tau_n \}
      & \mbox{boxed record} \\
      & \bnfor & \{l : \tau\} \; \mathtt{[@@unboxed]}
      & \mbox{unboxed record} \\
      & \bnfor & \tau
      & \mbox{type synonym} \\
    \end{array}
  \end{mathpar}
  \medskip

  Sub-component relation on type expressions:
  \belowdisplayskip 0em
  \begin{mathpar}[\mprset{lineskip=.8em}]
  \infer{ }{\sub \tau \tau}

  \infer
  {\sub {\tau_1} {\tau_2} \\ \sub {\tau_2} {\tau_3}}
  {\sub {\tau_1} {\tau_3}}
  \\
  \infer{i \in [1; n]}{\sub {\tau_i} {t(\tau_1, \dots, \tau_n)}}

  \infer{1 \in \{1, 2\}}{\sub {\tau_i} {\tau_1 \to \tau_2}}

  \infer{i \in [1; n]}{\sub {\tau_i} {\tau_1 \times \dots \times \tau_n}}
  \\
  \infer
    {\sub \tau \kappa \\ \alpha \notin \tau}
    {\sub \tau {\forall \alpha. \kappa}}

  \infer
    {\sub \tau \kappa \\ \alpha \notin \tau}
    {\sub \tau {\exists \alpha. \kappa}}

  \infer
    {i \in \{1, 2\}}
    {\sub {\tau_i} {(\guard{\tyeq \alpha {\tau_1}}{\tau_2})}}
  \end{mathpar}
  \caption{Syntax of type expressions, datatypes, and sub-component relation.}
  \label{fig:type-syntax}
\end{figure}

We define the syntax of the type expressions and datatypes that we consider
in the top and middle parts of Figure~\ref{fig:type-syntax}.
It is a representative subset of the OCaml grammar of
types,\footnote{For instance, we omit object types and polymorphic
  variants, but they could be handled just like products.} with
first-class existential types $\exists \alpha. \tau$ and type
restrictions $\guard{\tyeq \alpha \kappa}{\tau}$ to represent GADTs
with finer-grained rules, as explained in
Section~\ref{subsec:gadts-reminder}. We also include first-class
universal types $\forall \alpha. \tau$, although they are only allowed
in record or method fields in OCaml.

To define deep separability, we first need to define the syntactic
sub-components of a type (see Section~\ref{subsec:intro-deepsep}). We
define a sub-component relation $\sub \tau \kappa$ (``$\tau$ is
a sub-component of $\kappa$'') at the bottom of
Figure~\ref{fig:type-syntax}. The first two rules make the relation
reflexive and transitive, and the others give the immediate
sub-components for each type-former.

The definition of the sub-component relation is careful to preserve
the scoping of variables. For example,
$\sub \tau {\forall \alpha. \kappa}$ holds only if $\alpha$ does not
occur free in $\tau$, which we write $\alpha \notin \tau$. In
particular, whenever $\sub \tau \kappa$ holds, the free type variables
of $\tau$ are included in those of $\kappa$; the sub-components of
a closed type expression are also closed types.

\subsection{Separability modes}

\begin{figure}[htbp]
  \belowdisplayskip 0em
  \begin{multicols}{2}
    Separability modes (or simply, modes):
    \begin{mathpar}
      \begin{array}{l@{~}r@{~}l@{\quad}l}
        m, n & \bnfeq & \ind & \mbox{indifferent} \\
             & \bnfor & \sep & \mbox{separable} \\
             & \bnfor & \deepsep & \mbox{deeply separable}
      \end{array}
    \end{mathpar}

    Mode composition $\mcomp m n$:
    \begin{mathpar}
      \begin{array}{r@{\quad:=\quad}l}
        \mcomp \ind m & \ind \\
        \mcomp \sep m & m    \\
        \mcomp \deepsep m & \deepsep
      \end{array}
    \end{mathpar}
  \end{multicols}

  \begin{center}
    Order structure:\\
    $\ind < \sep < \deepsep$
  \end{center}
  \vspace{-.8em}
  \caption{Separability modes and mode operations.}
  \label{fig:separability-modes}
\end{figure}

Separability modes $m$, $n$, their order structure $m < n$ and mode
composition $\mcomp m {n}$ are defined in
Figure~\ref{fig:separability-modes}.
Modes are totally ordered from less to more demanding.
We may use derived notations such as $m \leq n$, $\min(m,n)$
or $\max(m,n)$ for the non-strict ordering, the minimum or the maximum of
two modes.

A mode $m$ expresses a requirement on a type expression, which comes from
its context: for the whole expression to be valid, some of its
sub-components must have the separability property $m$. The operation $\mcomp
m {n}$ expresses composition of those requirements in our inference rules
(given in the next section). For example, if $t(\alpha)$ requires its
parameter $\alpha$ to have mode $m$ for the whole expression to be separable,
and $u(\beta)$ requires its parameter $\beta$ to have mode $n$ for
the whole expression to be separable, then $t(u(\beta))$ is separable
when $\beta$ has mode $\mcomp m {n}$.

\subsection{Contexts and mode signatures}

Our separability judgments in Section~\ref{sec:separability_inference}
make use of contexts $\Gamma$, representing separability assumptions
on the type variables ($\alpha, \beta\dots$) that are in scope.
Moreover, we also rely on mode signatures $\Sigma$, representing the
separability requirements on the datatype constructors $t(\vec{\alpha})$
that are available in the scope. Contexts and mode signatures are defined in
Figure~\ref{fig:mode-signatures}.

\begin{figure}[htbp]
  \begin{mathpar}
      \Gamma \;\bnfeq\; \emptyset \mid \Gamma, \alpha : m

      \Sigma \;\bnfeq\; \emptyset \mid \Sigma, t(\vec{\alpha : m})
      \\
      \infer
      {\Gamma \leq \Gamma' \\ m \leq m'}
      {\Gamma, \alpha:m \leq \Gamma', \alpha:m'}

      \infer
      {\Sigma \leq \Sigma' \\ \vec{\alpha} = \vec{\alpha'} \\ \vec{m \geq m'}}
      {\Sigma, t(\vec{\alpha:m}) \leq \Sigma', t(\vec{\alpha':m'})}
  \end{mathpar}
  \caption{Contexts, mode signatures, and their order}
  \label{fig:mode-signatures}
\end{figure}

This figure also defines the extension of the order on modes to an
order on contexts and on mode signatures. The order on contexts is just
a point-wise extension of the mode order, imposing that two comparable
contexts have the same type variables. The order on signatures imposes
the same datatype constructors on both sides, but requires their
parameter modes to be pointwise comparable in the opposite order,
$(\geq)$ rather than $(\leq)$: parameters are in contravariant position.
For example, we have
$t(\alpha:\sep, \beta:\sep) \leq t(\alpha:\ind, \beta:\ind)$.

\section{Separability inference}
\label{sec:separability_inference}

The formal deduction rules that we will use to assess the separability of
datatype definitions are defined in Figure~\ref{fig:rules}. Type expressions
are checked by judgments of the form $\Sigma; \Gamma \vdash \tau : m$, which
can intuitively be read in either direction, from $\Gamma$ to $m$ or
conversely:
\begin{itemize}
\item If the type variables respect the modes in $\Gamma$,
  then the type expression $\tau$ has the mode $m$.
\item For the type expression $\tau$ to be safe at mode $m$, then
  its type variables need to have at least the modes given in $\Gamma$.
\end{itemize}

\begin{figure}[htbp]
  Inference rules at type-expression level:
  \begin{mathpar}[\mprset{lineskip=.8em}]
    \AxiomC{$(\alpha : m) \in \Gamma$}
    \UnaryInfC{$\Sigma; \Gamma \vdash \alpha : m$}
    \DisplayProof

    \AxiomC{$t(\alpha_1 : m_1, \dots, \alpha_n : m_n) \in \Sigma$}
    \AxiomC{$(\Sigma; \Gamma \vdash \tau_i : \mcomp m {m_i})_{1 \leq i \leq n}$}
    \BinaryInfC{$\Sigma; \Gamma \vdash t(\tau_1, \dots, \tau_n) : m$}
    \DisplayProof

    \AxiomC{$\Sigma; \Gamma \vdash \tau : m \circ \ind$}
    \AxiomC{$\Sigma; \Gamma \vdash \kappa : m \circ \ind$}
    \BinaryInfC{$\Sigma; \Gamma \vdash \tau \to \kappa : m$}
    \DisplayProof

    \AxiomC{$(\Sigma; \Gamma \vdash \tau_i : m \circ \ind)_{1 \leq i \leq n}$}
    \UnaryInfC{$\Sigma; \Gamma \vdash \tau_1 \times \dots \times \tau_n : m$}
    \DisplayProof

    \AxiomC{$\Sigma; \Gamma, \alpha : n \vdash \tau : m$}
    \UnaryInfC{$\Sigma; \Gamma \vdash \forall \alpha. \tau : m$}
    \DisplayProof

    \AxiomC{$\Sigma; \Gamma, \alpha : \ind \vdash \tau : m$}
    \UnaryInfC{$\Sigma; \Gamma \vdash \exists \alpha. \tau : m$}
    \DisplayProof

    \AxiomC{$\Sigma; \Gamma \vdash \tau : m$}
    \AxiomC{$m \geq n$}
    \BinaryInfC{$\Sigma; \Gamma \vdash \tau : n$}
    \DisplayProof

    \AxiomC{$
      \forall \Gamma' \geq \Gamma,
      \quad
      \Sigma; \Gamma' \vdash \tyeq {\kappa_1} {\kappa_2}
      \implies
      \Sigma; \Gamma' \vdash \tau : m
      $}
    \UnaryInfC{$\Sigma; \Gamma \vdash \guard{\tyeq {\kappa_1} {\kappa_2}}{\tau} : m$}
    \DisplayProof

    \infer
    {\forall m,\quad
     \Sigma; \Gamma \vdash \tau_1 : m
     \iff
     \Sigma; \Gamma \vdash \tau_2 : m}
    {\Sigma; \Gamma \vdash \tyeq {\tau_1} {\tau_2}}
  \end{mathpar}

  \medskip

  Inference rules at datatype level:
  \begin{mathpar}
    \AxiomC{}
    \UnaryInfC{$\Sigma; \Gamma \vdashdecl
      C_1 \of \tau_1 \;|\; \dots \;|\; C_n \of \tau_n$}
    \DisplayProof

    \AxiomC{$\Sigma; \Gamma \vdash \tau : \sep$}
    \UnaryInfC{$\Sigma; \Gamma \vdashdecl C \of \tau \; \mathtt{[@@unboxed]}$}
    \DisplayProof

    \AxiomC{}
    \UnaryInfC{$\Sigma; \Gamma \vdashdecl
      \{l_1 : \tau_1 \;;\; \dots \;;\; l_n : \tau_n\}$}
    \DisplayProof

    \AxiomC{$\Sigma; \Gamma \vdash \tau : \sep$}
    \UnaryInfC{$\Sigma; \Gamma \vdashdecl \{l : \tau\} \; \mathtt{[@@unboxed]}$}
    \DisplayProof

    \AxiomC{$\Sigma; \Gamma \vdash \tau : \sep$}
    \UnaryInfC{$\Sigma; \Gamma \vdashdecl \tau$}
    \DisplayProof
  \end{mathpar}

  \medskip

  Inference rule at the datatype declaration block level:
  \belowdisplayskip 0em
  \begin{mathpar}
    \AxiomC{$\exists (\vec{m_i})_{1 \leq i \leq n},\ \Sigma_{\mathsf{block}}
      := t_1(\vec{\alpha_1} : \vec{m_1}), \dots, t_n(\vec{\alpha_n} : \vec{m_n})$}
    \AxiomC{$\left(
        \Sigma_{\mathsf{env}}, \Sigma_{\mathsf{block}}
        ; \vec{\alpha_i} : \vec{m_i} \vdashdecl A_i
      \right)_{1 \leq i \leq n}$}
    \BinaryInfC{$\Sigma_{\mathsf{env}} \vdash
      (t_i(\vec{\alpha_i}) := A_i)_{1 \leq i \leq n}
      \dashv \Sigma_{\mathsf{block}}$}
    \DisplayProof
  \end{mathpar}

  \caption{Inference rules for separability.}
  \label{fig:rules}
\end{figure}

\paragraph{Type expressions.}

The rule for parametrized datatypes $t(\tau_1, \dots, \tau_m)$ uses
mode composition; for example, if $t$ is a one-argument type
constructor with signature $t(\alpha : n)$ and we want $t(\tau)$ to
have mode $m$, then $\tau$ needs to have the mode $\mcomp m {n}$.

The rules for concrete datatypes (functions and products, but also
most other OCaml type-formers if we were to add them) use
$\mcomp m \ind$ on their arguments. If $m$ is $\sep$ or $\ind$, then
$\mcomp m \ind$ is $\ind$, which corresponds to not requiring anything
of the sub-components: the values at type $\tau_1 \to \tau_2$ are all
functions, never floats, regardless of $\tau_i$. If $m$ is
$\deepsep$, then we do need to check the sub-components, and indeed
$\mcomp \deepsep \ind$ is $\deepsep$.

A value is at the universal type $\forall \alpha.\,\tau$ only if
it belongs to all the $\tau[\kappa/\alpha]$ for all $\kappa$. In
particular, it is enough to prove the separability at just one of
these $\tau[\kappa/\alpha]$, the universal type has even less values,
so we can assume the arbitrary mode $n$ of this particular $\kappa$
by adding $\alpha:n$ in the context. Conversely,
$\exists \alpha.\,\tau$ is inhabited by all the $\tau[\kappa/\alpha]$,
so $\tau$ has to have the desired mode for all possible modes of
$\kappa$. Instead of requiring the premise to hold for all possible
modes $n$, we equivalently ask for the most demanding mode $\ind$.

The conversion rule allows to forget some information about a type
$\tau : m$ by exporting it at a smaller mode $n \leq m$. For example, all
$\sep$ types are also $\ind$ types.

\paragraph{Equality constraints.}

The rule for equality constraints is the most complex rule in
the system. A first remark is that $\guard {\tyeq {\tau_1} {\tau_2}}
\kappa$ always has less elements than $\kappa$: it has the same elements
when the equality holds, or no elements otherwise.
In particular, if $\kappa : m$ holds in the current context $\Gamma$,
then $\guard {\tyeq {\tau_1} {\tau_2}} \kappa : m$
should also hold in $\Gamma$.

When we see an equality constraint, we gain more information, which
should allow us to mode-check more types. The way our rules represent
this information gain is by moving from the current mode context
$\Gamma$ to a stronger mode context $\Gamma'$.  More precisely, to
check $\guard {\tyeq {\tau_1} {\tau_2}} \kappa : m$ in $\Gamma$, the
rule asks to check $\kappa : m$ in \emph{any} context
$\Gamma' \geq \Gamma$ that is consistent with the assumption
$\tyeq {\tau_1} {\tau_2}$. This corresponds to the
$\Gamma' \vdash \tyeq {\tau_1} {\tau_2}$ hypothesis, which will be
explained shortly. For example, if $\Gamma$ is
$\alpha : \ind, \beta : \sep$, and we observe the equality
$\tyeq \alpha \beta$, then in particular we know that $\alpha : \sep$
also holds: if the two types are equal, they must have the same mode.
Our rule will type-check the body type $\kappa$ in
stronger contexts $\Gamma'$ with $\alpha : \sep, \beta : \sep$.

A context $\Gamma$ is valid for a type equality, written
$\Gamma \vdash \tyeq {\tau_1} {\tau_2}$, if the two types $\tau_1$ and
$\tau_2$ have exactly the same mode in $\Gamma$. Remark that it is not
enough to ask that, for a given mode $m$, both types have mode $m$;
for example, all types trivially have mode $\ind$. Instead, we ask
that for \emph{any} mode $m$, either $\tau_1$ and $\tau_2$ have mode
$m$, or neither of them have it. This is equivalent to requiring
$\tau_1$ and $\tau_2$ to have the same ``best'', maximal mode.

In our example where $\Gamma$ is $\alpha : \ind, \beta : \sep$, note that
we do not have $\Gamma \vdash \tyeq \alpha \beta$: $\beta$ has mode
$\sep$ but $\alpha$ does not. The modes $\Gamma' \geq \Gamma$ that
satisfy $\Gamma' \vdash \tyeq \alpha \beta$ are
$\Gamma_\sep := \alpha : \sep, \beta : \sep$, and
$\Gamma_\deepsep := \alpha : \deepsep, \beta : \deepsep$. To check
that $\guard {\tyeq \alpha \beta} \kappa : m$ holds in $\Gamma$, our
rule asks us to check that $\kappa : m$ holds in both $\Gamma_\sep$
and $\Gamma_\deepsep$. But in fact we, the implementers of the checking
algorithm, know that making stronger assumptions in the context makes
more mode-checks pass, so it suffices to check in context
$\Gamma_\sep$.\footnote{We could formulate the rule to only check the unique
minimal context $\Gamma'$, but Fact~\ref{fact:non-principal} in
Section~\ref{sec:semantics} suggests that such a unique context may
not always exist.}

Finally, it is interesting to consider the derivation of the following
judgment, which represents in our system the key ingredient of the
\caml{_ first} GADT that led to the introduction of $\deepsep$ in
Section~\ref{subsec:intro-deepsep}.
\[
  \alpha : m \vdash
    \exists \beta \gamma.\, \guard {\tyeq \alpha {\beta \times \gamma}} \beta
    : \sep
\]
We expect to accept this type declaration only in the case where $\alpha :
\deepsep$, this assumption guaranteeing that $\beta$ will be separable.
Opening the existentials puts $\beta, \gamma$ in the context at $\ind$:
\[
  \alpha : m, \beta : \ind, \gamma : \ind \vdash
    \guard {\tyeq \alpha {\beta \times \gamma}} \beta
    : \sep
\]
Let us now reason by case analysis on $m$ to show that only
$m = \deepsep$ has a valid derivation of this judgment.
\begin{itemize}
\item If $m$ is $\deepsep$, then any $\Gamma' \geq \Gamma$ with
  $\Gamma' \vdash \tyeq \alpha {\beta \times \gamma}$ has
  $\beta:\deepsep, \gamma:\deepsep$ since otherwise
  $\beta \times \gamma : \deepsep$ cannot hold. In particular,
  $\Gamma' \vdash \beta : \sep$ holds as expected, so the judgment is
  derivable.
\item If $m$ is $\sep$ or $\ind$, then
  $\Gamma' := \alpha:\sep, \beta:\ind, \gamma:\ind$ satisfies
  $\Gamma' \geq \Gamma$ and
  $\Gamma' \vdash \tyeq \alpha {\beta \times \gamma}$, but we do not
  have $\Gamma' \vdash \beta : \sep$: the judgment is not derivable.
\end{itemize}

\paragraph{Datatypes.} The judgment for datatypes is simple: a datatype
is accepted if its set of values is separable. Boxed
variant or record definitions are always separable, so no premises
are required. Unboxed variants/records with parameter type $\tau$,
or type synonyms for $\tau$ require $\tau : \sep$.

\paragraph{Definition blocks.} The judgment for definition blocks has
an input signature, $\Sigma_{\mathsf{env}}$ in the rule, which lists
assumptions that we make on the datatype constructors provided by the
typing environment, and an output signature, $\Sigma_{\mathsf{block}}$
in the rule, which is a valid signature for the current block of
mutually-recursive definitions. A definition block is valid if each
datatype definition it contains is valid. Note that each definition
is checked with the full $\Sigma_{\mathsf{block}}$ in the signature
context: all mutually-recursive datatype constructors are available in
scope.

\paragraph{Meta-theory.}

The following two lemmas can be easily proved by structural induction
on typing derivations.

\begin{lemma}[Cut elimination]\label{lem:cut-elimination}
  A separability derivation can always be rewritten so that all
  occurrences of the conversion rule only have axiom rules as their
  premises
  or, equivalently,
  using the following more primitive rule.
  \begin{mathpar}
   \infer
   {(\alpha : m) \in \Gamma \\ m \geq n}
   {\Sigma; \Gamma \vdash \alpha : n}
  \end{mathpar}
\end{lemma}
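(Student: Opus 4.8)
The plan is to prove the statement by structural induction on the derivation, permuting every use of the conversion rule upward until it either merges with another conversion or lands directly above an axiom. Since a conversion sitting immediately above the variable axiom is exactly an instance of the combined rule displayed in the statement, the ``only axiom premises'' formulation and the ``single primitive rule'' formulation coincide, so it suffices to exhibit this upward-permutation transformation and argue that it terminates. Concretely, I would define a normalization procedure that first recursively normalizes the premises of the last rule, and then, if that last rule is a conversion, permutes it past the (now normal) rule above it.

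The single algebraic ingredient I would isolate first is the monotonicity of mode composition in its first argument: for every mode $k$, if $n \leq m$ then $\mcomp n k \leq \mcomp m k$. This is immediate from the definition: as a function of its first argument with $k$ fixed, composition sends $\ind, \sep, \deepsep$ to $\ind, k, \deepsep$ respectively, and $\ind \leq k \leq \deepsep$ always holds, so the map is monotone. This fact is exactly what lets a conversion at the conclusion of a rule be realized by conversions at its premises whenever a premise mode depends on the conclusion mode.

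The heart of the proof is then a case analysis on the rule immediately above a trailing conversion from $\tau : m$ to $\tau : n$ (with $n \leq m$). If it is the axiom, the derivation is already in the desired form. If it is itself a conversion, then by normality of the premise it is a conversion-above-axiom, and the two conversions collapse by transitivity of $\leq$ into a single combined variable rule. For the constructor, function and product rules the premise modes have the shape $\mcomp m {m_i}$ or $\mcomp m \ind$; here I replace the conclusion-level conversion by premise-level conversions down to $\mcomp n {m_i}$ (resp.\ $\mcomp n \ind$), justified precisely by the monotonicity lemma, and re-apply the rule at $n$. For the universal and existential rules the premise mode equals the conclusion mode, so I simply convert the single premise from $m$ to $n$ and re-apply the rule with the bound-variable mode unchanged. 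In each non-axiom case the conversion has moved strictly closer to the leaves, so a standard height-based measure guarantees termination when the newly created premise-level conversions are re-normalized.

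The step I expect to be the main obstacle is the equality-guard rule (and, similarly, the equality judgment), whose premise is not an ordinary subderivation but a meta-level family: for all $\Gamma' \geq \Gamma$ consistent with $\tyeq {\kappa_1} {\kappa_2}$, a derivation of $\Gamma' \vdash \tau : m$. Pushing a conversion through it is conceptually trivial --- one post-composes each member of the family with a conversion from $m$ to $n$ --- but it forces the induction to be read with a well-founded measure that treats the entire quantified family (and the $\tau_i : m$ subderivations witnessing an equality) as the immediate subderivations to be normalized. Once the monotonicity of composition is in hand, every rule commutes with conversion in this purely mechanical way, which is exactly why the lemma can be proved ``easily'' by induction despite these infinitary premises.
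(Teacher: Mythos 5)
Your proposal is correct and follows the same route the paper intends: the paper offers no detailed argument beyond ``easily proved by structural induction on typing derivations,'' and your upward permutation of conversions, justified by the monotonicity of $\mcomp{m}{k}$ in $m$ and by post-composition inside the infinitary guard premise, is exactly that induction carried out in full. The only point worth making explicit is that termination should be phrased as structural (well-founded) induction on the premise derivation rather than on a numeric height, since the guard rule's quantified family of subderivations need not have a finite height.
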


\begin{lemma}[Monotonicity]\label{lem:monotonicity}
  The following rule is admissible.
  \begin{mathpar}
    \infer
    {\Sigma \leq \Sigma' \\
     \Gamma \leq \Gamma' \\
     \Sigma; \Gamma \vdash \tau : m \\
     m \geq m'}
   {\Sigma'; \Gamma' \vdash \tau : m'}
  \end{mathpar}
\end{lemma}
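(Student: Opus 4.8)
The plan is to prove admissibility by a single structural induction on the derivation of $\Sigma; \Gamma \vdash \tau : m$, run simultaneously with a companion statement for the equality judgment $\Sigma; \Gamma \vdash \tyeq{\tau_1}{\tau_2}$, since the two judgments are mutually defined through the equality-guard rule. For each rule I assume the monotonicity conclusion for every immediate sub-derivation and rebuild a derivation of $\Sigma'; \Gamma' \vdash \tau : m'$ under the hypotheses $\Sigma \leq \Sigma'$, $\Gamma \leq \Gamma'$ and $m \geq m'$. Before starting I would isolate the one arithmetic fact that drives almost all the cases, namely that mode composition is monotone in both arguments: a three-by-three check on $\ind < \sep < \deepsep$ shows that $m \geq m'$ and $n \geq n'$ imply $\mcomp{m}{n} \geq \mcomp{m'}{n'}$.

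With that fact in hand, the structural cases are mechanical. For the variable axiom, $\Gamma \leq \Gamma'$ carries the same variables with larger modes, so $(\alpha : \bar{m}) \in \Gamma'$ with $\bar{m} \geq m \geq m'$, and a single use of the primitive rule of Lemma~\ref{lem:cut-elimination} concludes. For a type constructor $t(\tau_1, \dots, \tau_n)$, the order $\Sigma \leq \Sigma'$ provides the same constructor in $\Sigma'$ with pointwise smaller parameter modes $m_i \geq m'_i$; each premise $\tau_i : \mcomp{m}{m_i}$ is pushed by the induction hypothesis down to the smaller target $\mcomp{m'}{m'_i}$ (legal since $\mcomp{m}{m_i} \geq \mcomp{m'}{m'_i}$), and the constructor rule is reapplied in $\Sigma'$. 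The arrow and product cases are identical, using $\mcomp{m}{\ind} \geq \mcomp{m'}{\ind}$. For $\forall \alpha. \tau$ I keep the bound-variable mode $n$ unchanged so that $\Gamma, \alpha:n \leq \Gamma', \alpha:n$ and reapply the rule; $\exists \alpha. \tau$ is the same with the fixed mode $\ind$. The conversion rule simply merges into the output-mode relaxation by transitivity of $\geq$.

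The one genuinely delicate case is the equality guard, and within it the interaction with the signature change is the crux. To establish $\Sigma'; \Gamma' \vdash \guard{\tyeq{\kappa_1}{\kappa_2}}{\tau} : m'$ I must show, for every $\Delta \geq \Gamma'$ with $\Sigma'; \Delta \vdash \tyeq{\kappa_1}{\kappa_2}$, that $\Sigma'; \Delta \vdash \tau : m'$. The context strengthening alone is painless: since $\Gamma \leq \Gamma' \leq \Delta$, the universally quantified premise of the original derivation already ranges over such stronger contexts, so at a fixed signature I would instantiate the premise at $\Delta$ and finish with the induction hypothesis for the relaxation $m \geq m'$. The signature change is what resists this move: equality validity is a biconditional quantified over all modes, hence a \emph{hypothetical} (negative) use of the mode judgment, and it is not preserved by $\Sigma \leq \Sigma'$, because passing to a less demanding signature can enlarge the set of contexts that validate $\tyeq{\kappa_1}{\kappa_2}$.

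I would therefore attack the guard case through an auxiliary relocation step: from a context $\Delta \geq \Gamma'$ with $\Sigma'; \Delta \vdash \tyeq{\kappa_1}{\kappa_2}$, produce a context $\Gamma''$ with $\Gamma \leq \Gamma'' \leq \Delta$ and $\Sigma; \Gamma'' \vdash \tyeq{\kappa_1}{\kappa_2}$, invoke the original premise at $\Gamma''$ to obtain $\Sigma; \Gamma'' \vdash \tau : m$, and then monotonize up to $\Sigma'; \Delta$ at mode $m'$ by the induction hypothesis (legitimate precisely because $\Gamma'' \leq \Delta$). The companion statement for the equality judgment that makes the mutual induction close is essentially the existence of such a relocated context, and I expect essentially all of the difficulty to sit here: it forces one to track how the maximal derivable mode of $\kappa_1$ and of $\kappa_2$ evolves along the interval $[\Gamma, \Delta]$ under the two signatures, exploiting that each such maximal mode is monotone (a consequence of the cases already treated). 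I would single this relocation lemma out and prove it first, possibly strengthening the induction hypothesis so that the signature component and the context-plus-mode component are carried jointly rather than composed; once it is in place, the guard case and hence the whole induction fall out.
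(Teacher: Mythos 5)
Your routine cases (variable, constructor, arrow/product, quantifiers, conversion) are exactly the structural induction the paper has in mind --- the paper offers nothing beyond ``easily proved by structural induction'' --- and the observation that $\mcomp m n$ is monotone in both arguments is correct and is indeed what drives them. You have also correctly located the only non-mechanical case: the equality-guard rule, whose premise quantifies over all contexts validating $\tyeq{\kappa_1}{\kappa_2}$, a condition that is a biconditional over modes and hence not monotone in $\Sigma$. The paper's one-line proof does not engage with this at all, so on this point you have seen further than the text.

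However, your proposal does not close that case: it defers everything to an unproven ``relocation lemma,'' and that lemma is false --- and, as far as I can tell, so is the stated rule itself in its signature component. Take $\Sigma := t(\alpha : \sep)$ and $\Sigma' := t(\alpha : \ind)$, so $\Sigma \leq \Sigma'$, and let $\Gamma = \Gamma' := (\beta : \sep,\ \gamma : \ind)$, $\tau := \guard{\tyeq{\beta}{t(\gamma)}}{\gamma}$ and $m = m' := \sep$. Under $\Sigma$ the constructor premise for $t(\gamma) : m$ is $\gamma : \mcomp{m}{\sep} = m$, so in a context $(\beta : b,\ \gamma : n)$ the maximal derivable mode of $t(\gamma)$ is $n$; the contexts $\Gamma'' \geq \Gamma$ validating the equality are exactly those with $b = n \geq \sep$, and in all of them $\gamma : \sep$ holds, so $\Sigma; \Gamma \vdash \tau : \sep$ is derivable. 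Under $\Sigma'$ the premise becomes $\gamma : \mcomp{m}{\ind}$, so $t(\gamma) : \sep$ holds in every context; hence $\Delta := (\beta : \sep,\ \gamma : \ind)$ itself validates $\tyeq{\beta}{t(\gamma)}$ while $\Sigma'; \Delta \nvdash \gamma : \sep$, and therefore $\Sigma'; \Gamma' \nvdash \tau : \sep$. No relocated $\Gamma''$ with $\Gamma \leq \Gamma'' \leq \Delta$ can exist, since here $\Gamma = \Delta$. Your induction does go through if the signature is held fixed (then $\Delta \geq \Gamma' \geq \Gamma$ lets you instantiate the original guard premise at $\Delta$ directly) or on constraint-free types; but as a proof of the lemma as stated, the guard case is a genuine gap, and one that appears to expose a problem with the statement rather than with your strategy.
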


\section{Semantics}
\label{sec:semantics}

Due to space limitations, we moved our presentation of separability
semantics to Appendix~\ref{app:semantics}. It contains a precise
semantic characterization of separability judgments in the flavor of
set-theoretic or realizability models, and a discussion of soundness,
principality and completeness. Many results are left as conjectures --
we explain why some of them are surprisingly delicate.

\section{Integration into OCaml}

We are now going to highlight some important points of our implementation
in the OCaml type-checker. The corresponding code has been proposed for
integration through a GitHub pull request, that is visible at the following
URL.
\begin{center}
  \url{https://github.com/ocaml/ocaml/pull/2188}
\end{center}

Our implementation is derived from the type system given in
Section~\ref{sec:separability_inference} by inferring mode signatures
for type definitions. It only assigns mode signatures that can be
justified by our syntactic type system. For example,
if \caml{('a : Sep, 'b : Ind) t} is assigned to a type
definition \caml{type ('a, 'b) t = A}, then
$
 \Sigma_{\mathsf{env}} \vdash
 t(\alpha, \beta) := A \dashv t(\alpha : \sep, \beta : \ind)
$
must be derivable using the inference rules of Figure~\ref{fig:rules}.

\subsection{Inferring block signatures with a fixpoint}

Our main checking function constructs a block signature
$\Sigma_{\mathsf{block}}$ given an environment $\Sigma_{\mathsf{env}}$
and a block of type definitions
$(t_i(\vec{\alpha_i}) := A_i)_{1 \leq i \leq n}$:
\begin{minted}{ocaml}
val check : Env.t -> type_definition ConstrMap.t -> mode_signature ConstrMap.t
\end{minted}
In this signature, \caml{type_definition ConstrMap.t} maps datatype
constructors $t_i(\vec{\alpha})$ to datatype definitions $A_i$, and
\caml{mode_signature ConstrMap.t} them to a mode signature
$t_i(\vec{\alpha:m})$.

It is not possible to directly compute mode signatures for mode
definitions, due to recursive types: we need to know the mode
signature of the type constructors in order to assign them a mode
signature. This circularity is solved by using a fixpoint computation:
we iteratively refine an approximation of the mode signatures.

The fixpoint computation starts with the most permissive mode
signature, which only requires mode $\ind$ for the variables of every
type constructor of the block. At each iteration, the separability of
every type of the block is checked against the current approximation
of the mode signatures, accumulating more precise constraints whenever
required. If the mode signatures coming from these constraints are more
demanding than those of the current approximation, we define them as the new
approximation and continue. Otherwise, we have found a mode signature
that validates the judgment -- it is sound. In fact, it is the most
permissive mode signature that we can find by iteration in this way.

\subsection{Management of GADTs}
\label{subsec:implem-gadts}

As explained in Section~\ref{subsec:gadts-reminder}, our type system
for separability does not directly account for GADTs: they are encoded
using existential quantifiers and equality guards. Our implementation
handles only GADTs, which correspond to a very specific mode of use of
existentials and guards within type declarations.

Consider, for example, \caml{type 'a fun = Fun : ('b -> 'c) ->
  ('b -> 'c) fun}. The return type \caml{('b -> 'c) fun} determines
the equality guard $\tyeq{\mathtt{'a}}{\mathtt{'b} \to \mathtt{'c}}$,
and the existential types are the free type variables $\mathtt{'b}$,
$\mathtt{'c}$ of the declaration. In general, unboxable GADT types contain
existential quantifiers only at the toplevel, immediately followed by
one equation for each type parameter, with finally a concrete parameter
type $\tau$.

The first thing to note is that all existential quantifiers occurs
at the top level: they exactly correspond to the free
variables of the parameter type. We can infer a mode signature for the
parameter type $\tau$ and, following our rule for existentials, check
that the modes inferred for the free existential variables in $\tau$
are $\ind$ and fail otherwise. Finally, the mode signature for the
definition of the GADT is obtained by removing the existential
variables from the inferred mode signature.

The delicate matter with in handling of GADTs is unsurprisingly the
management of equality guards. Recall that an unboxed GADT
$\mathtt{type}~t(\vec{\alpha}) = \mathtt{K} : \tau \to t(\vec{\kappa})$
is encoded as
$\exists \vec{\beta}.  \guard{(\vec{\alpha = \kappa})}{\tau}$,
where the $\vec{\beta}$ are the type variables free in
$\tau, \vec{\kappa}$ and the $\vec{\kappa}$ do not contain any
$\alpha_i$. The idea of the implementation is to first infer a mode context
$\Gamma$ such that $\Sigma; \Gamma \vdash \tau : \sep$, assigning
modes to the variables in $\vec{\alpha}$ and $\vec{\beta}$. Equations
are then discharged one by one, refining $\Gamma$ in the process,
before the existential variables $\vec{\beta}$ are checked to be
$\ind$ and eliminated to create the final context for the GADT
parameters only -- the mode signature. Every equation, taken in any
order, is managed according to the following three cases:
\begin{itemize}
  \item An equation of the form $(\alpha_i = \beta_j)$ leads to $\Gamma$ being
    updated with $\{\alpha_i \mapsto \Gamma(\beta_j), \beta_j \mapsto \ind\}$.
  \item An equation of the form $(\alpha_i = \kappa)$ with
    $\Gamma \cap FV(\kappa)$ only containing $\ind$ leads to the
    equation being discarded without any update to $\Gamma$.
  \item Any other equation (i.e., equations $(\alpha_i = \kappa)$ with
    $\Gamma \cap FV(\kappa)$ not only containing $\ind$) leads
    to $\Gamma$ being updated with
    $\{\alpha_i \mapsto \deepsep, FV(\kappa) \mapsto \ind \}$.
\end{itemize}
These transformations respect the following invariant: if the equation
$\tyeq \alpha \kappa$ updates $\Gamma$ into $\Gamma_0$, then any
$\Gamma' \geq \Gamma_0$ that satisfies $\Gamma' \vdash \tyeq \alpha \kappa$
is above the original $\Gamma$ ($\Gamma' \geq \Gamma$). In other words,
strengthening the resulting context $\Gamma_0$ with the equation we just handled
would give a context as permissive or more than the original context $\Gamma$.

\subsection{Cyclic types}
\label{subsec:cyclic_types}

The OCaml type system accepts equi-recursive types. By default, any
cycle in types must go through a polymorphic variant or object type
constructor, but the \caml{-rectypes} option generalizes this to any
ground type constructor. A type such as \caml{('a -> 'b) as 'b}, for example,
gives a cyclic representation to the infinite type \caml{'a -> 'a -> 'a -> ...},
and must be supported by our implementation.

To support cyclic types, we extend our checking rules to support
a form of coinduction: each time we reduce a judgment to simpler
premises (for example from $\Gamma \vdash \tau_1 \to \tau_2 : m$ to
$\Gamma \vdash \tau_i : \mcomp m \ind$), we record in the premises
that we have previously encountered the judgment
$\tau_1 \to \tau_2 : m$. If, later, we encounter the same judgment to
prove, we terminate immediately with a success.

With this rule, it is possible to prove both
\begin{mathpar}
  \alpha : \ind \vdash ((\alpha \to \beta) \mathtt{~as~} \beta) : \sep

  \alpha : \deepsep \vdash ((\alpha \to \beta) \mathtt{~as~} \beta) : \deepsep
\end{mathpar}
Informally, the reason why recursive occurrences of the same judgment
can be considered an immediate success is that we ``made progress''
between the first occurrence of the cyclic type $\beta$ and its second
occurrence in $\alpha \to \beta$: the second occurrence is ``guarded''
under a value constructor, and the set of values of $\beta$ we are
classifying as separable, or deeply separable, is not the one we
started from -- that would be an invalid cyclic reasoning -- but a copy
of it occurring deeper in the type structure.

However, some cyclic types such as $t(\alpha) \mathtt{~as~} \alpha$
have a less clear status, as the recursive occurrence is not guarded
under a computational type constructor (arrow, product...), but under
an abbreviation. Is this type well-defined if, for example,
$t(\alpha)$ is defined as $t(\alpha) := \alpha$? OCaml rejects some of
these dubious-looking circular types, but instead of trusting our fate
to the rest of the type-checker we decided to account for them in our
theory. We split our set of co-inductive hypotheses (the list of
judgments that we are trying to prove) into a set of ``safe''
hypotheses $\Theta_{\mathsf{safe}}$, which can be used immediately,
and a set of ``unsafe'' hypotheses, which can only be used after
a computation type constructor has been traversed.

Here are some rules of the corresponding formal system, extended with
coinductive hypotheses, which guided our OCaml implementation:
\begin{mathpar}
  \infer
  {\forall i \in \{1, 2\},\quad
    \Sigma; \Gamma; \Theta_{\mathsf{safe}}, \Theta_{\mathsf{unsafe}},
    (\tau_1 \to \tau_2 : m); \emptyset \vdash \tau_i : \mcomp m \ind
  }
  {\Sigma; \Gamma; \Theta_{\mathsf{safe}}; \Theta_{\mathsf{unsafe}}
    \vdash \tau_1 \to \tau_2 : m}

  \infer{
    t(\alpha_1 : m_1, \dots, \alpha_n : m_n) \in \Sigma
    \\
    (\Sigma; \Gamma; \Theta_{\mathsf{safe}};
     \Theta_{\mathsf{unsafe}}, (t(\tau_1, \dots, \tau_n) : m)
     \vdash \tau_i : \mcomp m {m_i})_{1 \leq i \leq n}
  }{\Sigma; \Gamma; \Theta_{\mathsf{safe}}; \Theta_{\mathsf{unsafe}}
    \vdash t(\tau_1, \dots, \tau_n) : m}

  \infer
  {(\tau : m') \in \Theta_{\mathsf{safe}} \\ m' \geq m}
  {\Sigma; \Gamma; \Theta_{\mathsf{safe}}; \Theta_{\mathsf{unsafe}} \vdash \tau : m}

  \infer
  {(\tau : m) \in \Theta_{\mathsf{unsafe}}}
  {\Sigma; \Gamma; \Theta_{\mathsf{safe}}; \Theta_{\mathsf{unsafe}} \nvdash \tau : m}
\end{mathpar}
The arrow rule has its premises under a computational type
constructor, so it passes all coinductive hypotheses, including the
new assumption on $\tau_1 \to \tau_2$, to its safe set. A datatype
constructor may be just an abbreviation, so it adds the new hypothesis
to the unsafe set.

Finally, whenever a judgment needs to be proved, it immediately
succeeds if a stronger hypothesis is in the safe set. On the other
hand, if our hypothesis is already in the unsafe set, then we know
that it cannot be proven without recursively assuming itself, and we
can in fact fail with an error.

\subsection{Non-conservativity}

Section~\ref{subsec:principality}, Fact~\ref{fact:non-principal} shows
that our modes do not admit principal judgments -- in particular our
judgments are non-principal. In particular, the following OCaml
declaration can be given either signatures
$t(\alpha:\ind, \beta:\sep)$ and $t(\alpha:\sep, \beta:\ind)$.
\begin{minted}{ocaml}
  type ('a, 'b) t = K : 'c -> ('c, 'c) t
\end{minted}

Our implementation will choose one of the two minimum signatures,
depending on the order in which constraints are handled -- see
Section~\ref{subsec:implem-gadts}. This means that some correct uses
of the type will be disallowed: no matter which one is chosen, one of
the two following declarations will be rejected:
\begin{minted}{ocaml}
  type t1 = T1 : (int, 'a) t -> t1
  type t2 = T2 : ('a, int) t -> t2
\end{minted}
On the other hand, the current implementation, which expands the
definition of \caml{t}, accepts both definitions.

We have decided to accept these completeness regressions. Our
implementation is cleaner, accepts important examples rejected by the
current implementation, and is safer in its handling of cycles,
without fuel. In contrast, the counter-examples we could build are
fairly esoteric, and we have not found any of them in the current
testsuite or user programs. Only time will tell if this assumption is
reasonable; we discuss ideas to recover principality (and accept both
\caml{t1} and \caml{t2}) in Future Work Section~\ref{par:richer-modes}.

\section{Related and future work}

\subsection{Future work}

\paragraph{Richer modes}
\label{par:richer-modes}

The implementation of our unboxability check is satisfactory in the
sense that it accepts most valid (unboxed) definitions. There is
however room for improvement, especially in the handling of type
equality guards, be they in GADTs or in toplevel equality
constraints.

For instance, the way we handle equations of GADTS (see the
previous section) is in some sense incomplete, and our language of
modes is not principal. We considered extending our modes with modes
of the form $\exists (\beta : m).\, m'$ and
$\guard {\tyeq \alpha \tau} m$. However, this would significantly
increase the complexity of the theory and the implementation, only to
handle corner cases that may not be worth it.

Another simpler approach to regain an impression of principality
would be to support disjunctions of modes. In the problematic example
$t(\alpha, \beta) := \guard{\tyeq \alpha \beta} \beta$, there are two
minimum modes $t(\alpha:\ind, \beta:\sep)$ and
$t(\alpha:\sep, \beta:\ind)$, so this type could be given the
principal mode
$t (\alpha:\ind, \beta:\sep)\vee(\alpha:\sep,\beta:\ind)$.

\paragraph{Automatic unboxing}

Currently, unboxable type declarations are never unboxed
automatically, the user has to explicitly ask for it. Automatic
unboxing has been considered, but it could break existing code using
the foreign function interface. For example, a C function receiving an
OCaml value from an unboxable but non-unboxed type currently needs to
unbox it explicitly. The same action on an (automatically) unboxed
type would fail if the C code is not changed accordingly.

\paragraph{Disjoint GADT unboxing}

One could wish to unbox multi-constructors GADTs in the case where
typing constraints lead to the mutual exclusion of the different
constructors.
\begin{minted}{ocaml}
  type _ value =
    | Int : int -> int value
    | Bool : bool -> bool value
    | Pair : ('b * 'c) -> ('b * 'c) value
\end{minted}
While each ground instance of this value type has a single possible
constructor and could be unboxed, pattern-matching on an \caml{'a value}
would then have to be disallowed: pattern-matching learns the value of
\caml{'a} from inspecting the GADT constructor, which is not present
anymore in the unboxed representation. It seems fishy to only
allow pattern-matching on partially-determined instances of the type,
and we did not investigate further.

\subsection{Just get rid of the damn float thing}

The fairly elaborate sufferings we just went through are caused by the
dynamic unboxing optimization for arrays of floating point numbers. If
this dynamic optimization were removed, we would not need separability
anymore and the unboxing check could also be removed. This dynamic
check also has consequences on other features: by making a $\top$ type
(our $\exists \alpha.\,\alpha$) unsound, it prevents extending the
relaxed value restriction to generalize contravariant
variables~\citep[page 11]{garrigues04}.

There is an ongoing debate in the OCaml community on this idea. An
experimental configure flag \texttt{-no-flat-float-array} can be set
to disable dynamic flat representation optimizations in the
implementation, and benefit from the simpler type theory. Since 4.06
(November 2017), a new primitive monomorphic type \texttt{floatarray}
exists that is specialized for unboxed float arrays, and can be used
by users intending to use this optimization, but it lacks library
support and convenient array-indexing notations. The problem is with
generic code, written against parametric \texttt{'a array} value and
then applied in numeric programs on float array, whose performance
would be silently degraded with an important slowdown. In other terms,
removing the dynamic optimization would be acceptable for experts
authors of numerical code willing to modify their codebase, but hurt
the performance of programs written naively by beginners.

\subsection{Related work}

We discussed the existing implementation of the unboxability check in
Section~\ref{subsec:existing-check}.

The approach presented in our work is largely inspired from the way
the variance of type declarations is handled in languages with
subtyping \citep*{abel08, scherer13}.

The memory representation of values used in OCaml and similar
languages finds its origin in Lisp-like languages
\citep*{leroy90}. Despite the advantage of allowing every value to
have the same, single-word representation, this approach also has
obvious limitations in terms of performances due to the introduction
of indirections. As a consequence, ways of lowering this overhead in
certain scenarios have been investigated, one possibility being to mix
tagged and untagged representations \citep*{peterson89, leroy90,
  leroy92}. Another idea that has been investigated is to consider
unboxed values as first-class citizens, although distinguished by
their types \citep*{peyton-jones91}.

\label{sect:bib}
\bibliography{biblio}

\newpage
\appendix

\section{Semantics}
\label{app:semantics}

\newcommand{\GType}{\mathsf{GType}}
\newcommand{\ground}[1]{\underline{#1}}
\newcommand{\sem}[1]{\llbracket #1 \rrbracket}

\newcommand{\sat}[3]{#2 \vDash^{#1} #3}
\newcommand{\unsat}[3]{#2 \nvDash^{#1} #3}
\newcommand{\satdecl}[3]{#2 \vDash^{#1}_{\mathsf{decl}} #3}
\newcommand{\satsig}[2]{#1 \vDash #2}

\begin{figure}[htbp]
Ground/closed values $\ground{v}$,
type expressions $\ground{\tau}$,
datatypes $\ground{A}$,\\
context valuations $\ground{\gamma}$,
datatype signature valuations $\sigma$
\begin{mathpar}
  \begin{array}{l@{~}r@{~}l@{\quad}l}
    \ground{v}
    & \bnfeq & \builtin{true}, \builtin{false} & \mbox{booleans} \\
    & \bnfor & \operatorname{int}(n \in \mathbb{N}) & \mbox{integers} \\
    & \bnfor & \operatorname{float}(x \in \mathbb{R}) & \mbox{float} \\
    & \bnfor & (\ground{v_1}, \dots, \ground{v_n}) & \mbox{tuple} \\
    & \bnfor & \builtin{function} & \mbox{function} \\
    & \bnfor & \{ l_1 : \ground{v_1}; \dots; l_n : \ground{v_n} \}
                                               & \mbox{record} \\
    & \bnfor & C_i~\ground{v} & \mbox{variant} \\
  \end{array}
  \quad
  \begin{array}{rll}
    \operatorname{closed}(\tau)
    & :=
    & \forall \alpha,\ \alpha \notin \tau
    \\
    \GType
    & :=
    & \{ \ground{\tau} \mid \operatorname{closed}(\tau) \}
    \\[1em]
    \operatorname{closed}(A)
    & :=
    & \forall \tau \in A,\ \operatorname{closed}(\tau)
    \\
    \mathsf{GDatatype}
    & :=
    & \{ \ground{A} \mid \operatorname{closed}(A) \}
    \\[1em]

    \ground{\gamma}
    & \in
    & \mathcal{P}(\mathsf{TypeVar} \to \GType)
    \\
    \sigma
    & \in
    & \mathcal{P}(\mathsf{TypeConstructor} \to \mathsf{GDatatype})
  \end{array}
\end{mathpar}

\medskip

Values at a ground type expression $\sat \sigma {\ground v} {\ground \tau}$
\begin{mathpar}[\mprset{lineskip=.8em}]
  \infer{ }{\sat \sigma {\builtin{true}, \builtin{false}} {\ground {\builtin{bool}}}}

  \infer{ }{\sat \sigma {\operatorname{int}(n)} {\ground{\builtin{int}}}}

  \infer{ }{\sat \sigma {\operatorname{float}(x)} {\ground{\builtin{float}}}}

  \infer{ }
  {\sat \sigma {\builtin{function}} {\ground{\tau_1 \to \tau_2}}}

  \infer{\left(\sat \sigma {\ground{v_i}} {\ground{\tau_i}}\right)_{1 \leq i \leq n}}
  {\sat \sigma {(\ground{v_1}, \dots, \ground{v_n})}
        {\ground{(\tau_1 \times \dots \times \tau_n)}}}

  \infer{\exists \ground{\tau} \in \GType,\quad
    \sat \sigma {\ground v} {\ground{\kappa[\ground{\tau}/\alpha]}}}
  {\sat \sigma {\ground v} {\ground{\exists \alpha. \kappa}}}

  \infer{\forall \ground{\tau} \in \GType,\quad
    \sat \sigma {\ground v} {\ground{\kappa[\ground{\tau}/\alpha]}}}
  {\sat \sigma {\ground v} {\ground{\forall \alpha. \kappa}}}

  \infer{
    (t(\vec{\alpha}) := A) \in \sigma \\
    \satdecl \sigma {\ground v}
      {\ground{A[\ground{\tau_1} / \alpha_1, \dots, \ground{\tau_n} / \alpha_n]}}}
  {\sat \sigma {\ground v} {\ground {t(\tau_1, \dots, \tau_n)}}}

  \infer{
    (\ground{\tau_1} = \ground{\tau_2})
    \implies
    {\sat \sigma {\ground v} {\ground \kappa}}}
  {\sat \sigma {\ground v} {\ground {\guard {\tyeq {\tau_1} {\tau_2}} \kappa}}}
\end{mathpar}

\medskip

Values at a ground datatype $\satdecl \sigma {\ground v} {\ground A}$
\begin{mathpar}[\mprset{lineskip=.8em}]
  \infer
  {\sat \sigma {\ground v} {\ground{\tau_i}}\\ 1 \leq i \leq n}
  {\satdecl \sigma {C_i~\ground{v}}
    {\ground{C_1 \of \tau_1 \mid \dots \mid C_n \of \tau_1}}}

  \infer
  {\sat \sigma {\ground v} {\ground \tau}}
  {\satdecl \sigma {\ground v} {\ground{C \of \tau \; \mathtt{[@@unboxed]}}}}

  \infer
  {\left( \sat \sigma {\ground{v_i}} {\ground{\tau_i}} \right)_{1 \leq i \leq n}}
  {\satdecl \sigma {\{ l_1 : \ground{v_1}; \dots; l_n : \ground{v_n} \}}
    {\ground{\{ l_1 : \tau_1; \dots; l_n : \tau_n \}}}}

  \infer
  {\sat \sigma {\ground v} {\ground \tau}}
  {\satdecl \sigma {\ground v} {\ground{\{ l : \tau \}  \; \mathtt{[@@unboxed]}}}}

  \infer
  {\sat \sigma {\ground v} {\ground \tau}}
  {\satdecl \sigma {\ground v} {\ground \tau}}
\end{mathpar}

\medskip

Ground types at a mode $\sat \sigma {\ground \tau} m$,
ground valuations at a context $\sat \sigma {\ground \gamma} \Gamma$
\begin{mathpar}[\mprset{lineskip=.8em}]
  \begin{array}{lll}
    \operatorname{isfloat}(\ground{v})
    & :=
    & \exists x,\ \ground{v} = \operatorname{float}(x)
    \\
    \operatorname{separable}(X)
    & :=
    & (\forall \ground{v} \in X,\ \operatorname{isfloat}(v))
      \vee
      (\forall \ground{v} \in X,\ \neg\operatorname{isfloat}(v))
  \end{array}
  \\
  \infer{ }
  {\sat \sigma {\ground \tau} \ind}

  \infer{\operatorname{separable}(\{
      \ground v \mid \sat \sigma {\ground v} {\ground \tau} \}
    \})}
  {\sat \sigma {\ground \tau} \sep}

  \infer{\forall {\sub {\tau'} \tau},\quad
    \sat \sigma {\ground {\tau'}} \sep}
  {\sat \sigma {\ground \tau} \deepsep}

  \infer
  {\forall (\alpha : m) \in \Gamma,\quad
   \sat \sigma {\ground{\gamma(\alpha)}} m}
  {\sat \sigma {\ground \gamma} \Gamma}
\end{mathpar}

Remark: we have $\unsat \sigma {\ground{\exists \alpha.\,\alpha}} \sep$,
which means that not all ground types are separable.

\medskip

Ground datatypes at a mode $\satdecl \sigma {\ground A} m$,\\
parametrized datatypes at a mode signature $\sat \sigma A {t(\vec{\alpha:m})}$,\\
datatype definitions at a block signature $\sat \sigma \sigma \Sigma$
\belowdisplayskip 0em
\begin{mathpar}[\mprset{lineskip=.8em}]
  \infer
  {\operatorname{separable}(\{ \ground v \mid \satdecl \sigma {\ground v} {\ground A} \})}
  {\satdecl \sigma {\ground A} \sep}

  \infer{\forall {\sat \sigma {\ground \gamma} {\vec{\alpha:m}}},\quad
    \satdecl \sigma {\ground{\gamma(A)}} \sep}
  {\sat \sigma A {t(\vec{\alpha:m})}}

  \infer{\forall (t(\vec{\alpha}) := A) \in \sigma,\quad
    \sat \sigma A {\Sigma(t)}}
  {\satsig {\sigma} \Sigma}
\end{mathpar}
\caption{Ground semantics}
\label{fig:ground-semantics}
\end{figure}

\begin{figure}[htbp]
  \abovedisplayskip 0em
  \belowdisplayskip 0em
  \begin{mathpar}[\mprset{lineskip=.8em}]
    \infer
    {\forall {\satsig \sigma \Sigma},
     \forall {\sat \sigma \gamma \Gamma},
     \quad
     \sat \sigma {\gamma(\tau)} m}
    {\Sigma; \Gamma \vDash \tau : m}

    \infer
    {\forall {\satsig \sigma \Sigma},
     \forall {\sat \sigma \gamma \Gamma},
     \quad
     \satdecl \sigma {\gamma(A)}}
    {\Sigma; \Gamma \vDash_{\mathsf{decl}} A}

    \infer
    {\forall {\satsig {\sigma_0} {\Sigma_0}},
     \quad
     \satsig {\sigma_0, \sigma} {\Sigma_0, \Sigma}
     }
    {\Sigma_0 \vDash \sigma \Dashv \Sigma}
  \end{mathpar}
  \caption{Judgment semantics}
  \label{fig:judgment-semantics}
\end{figure}

We give a semantic model of types, datatypes and modes in
Figure~\ref{fig:ground-semantics}. The general idea is to be able to
interpret the judgment $\Gamma \vdash \tau : m$ as follows: ``if we
replace each type variable $(\alpha:m_\alpha)$ of $\Gamma$ by
a closed/ground type with the correct separability mode $m_\alpha$,
then $\tau$ will really have the separability mode $m$''.

\paragraph{Ode to semantics.}

In Section~\ref{sec:separability_inference} we have given a set of
inference rules to prove judgments of the form
$\Sigma; \Gamma \vdash \tau : m$, guided by the intuition that this
\emph{should} provide evidence that $\tau$ is separable. Inference
rules are our key contribution as they can easily be turned into
a checking or inference algorithm, but they are also subtle, and may
very well be wrong. They are hard to audit by someone else who does
not trust us, the authors: there are a lot of details to check.

The point of a semantics is to provide an ``obvious'' counterpart to
inference rules. A formal definition of separability, or whatever
property an inference system is trying to capture, that is
self-evident, can be easily checked and trusted by other people -- in
particular, it does not depend on the previous inference rules in any
way. It serves as a specification, can use arbitrary mathematical
operations, and does not need to be computable or close to an
algorithm. Finally, one wishes to prove that the syntactic inference
rules and the semantics coincide in some sense; this implies that we
can trust the inference rules as much as we trust the semantics.

Our semantics can also be understood as an idealized ``model'' of the
programming language we are studying: it contains the assumptions that
we make about the language and type system, and
can be compared to OCaml to ensure that those assumptions are correct.

\paragraph{Ground value semantics.}

To define what it means for a closed type $\ground{\tau}$ to have the
mode $m$, we use the set-theoretic intuition introduced earlier:
``its values are all floating-point numbers, or none of them are''. To
do so formally, we introduce a syntax of closed/ground values, and
specify which ground values inhabit which ground types.

Our notion of ground/closed values is idealized; in particular, we
represent all values at a function type as an opaque
$\builtin{function}$ blob, as if all functions were represented in the
same way in the programming languages we model. They are not, but the
differences in representation are irrelevant to reason about
separability.

In the figure we define ground values $\ground{v}$ (just data, no
term variables), ground types $\ground{\tau}$ (no free
type variables), ground datatypes (no free type variables
or parameters), and blocks of datatype definitions $\sigma$, which
assign a datatype to each datatype constructor of a signature
$\Sigma$.

Finally, we define a series of semantic judgments using the symbol
$(\vDash)$. The judgment $\sat \sigma {\ground v} {\ground \tau}$
specifies when a value inhabits a type, relative to a block of
definitions $\sigma$ to interpret type constructors.  The judgment
$\sat \sigma {\ground v} {\ground A}$ specifies when a value inhabits
a type parameter; note in particular how the judgments for
\texttt{unboxed} datatypes reflect what happens in an implementation.

The judgments $\sat \sigma {\ground \tau} m$ and
$\sat \sigma {\ground A} m$ specify when a type expression or
datatype respect the separability mode $m$. They are defined in terms
of our set-theoretic characterizations,
$\operatorname{separable}(X)$. Finally,
$\sat \sigma A t(\vec{\alpha:m})$ specifies when a parametrized
datatype $A$ respects a mode signature, and $\satsig \sigma \Sigma$
specifies that a definition block respects a signature block.

In Figure~\ref{fig:judgment-semantics}, we use these specifications to
build a semantic counterpart for each of our syntactic judgments.  This
lets us easily formulate soundness and partial completeness
results. For example, $\Sigma; \Gamma \vDash \tau : m$ is the semantic
counterpart of the judgment $\Sigma; \Gamma \vdash \tau : m$. It
captures what it means for the judgment to hold: for any valid
definitions $\satsig \sigma \Sigma$, and any choice of ground types
$\ground{\gamma}$ valid for $\Gamma$
($\sat \sigma {\ground \gamma} \Gamma$), replacing the variables in
$\tau$ by the ground types in $\ground{\gamma}$ gives a ground type
$\ground{\gamma(\tau)}$ at mode $m$, that is,
$\sat \sigma {\ground{\gamma(\tau)}}$ holds.

One does not have to trust the inference rules of our syntactic judgment
$\Sigma; \Gamma \vdash \tau : m$, which may very well be wrong, to
trust that this semantic judgment $\Sigma; \Gamma \vDash \tau : m$
captures a good notion of separability. One need to read the
declarative rules of figures~\ref{fig:ground-semantics}
and~\ref{fig:judgment-semantics}, which we designed to be obvious. It
is now evident what the right statements should be for soundness and
completeness of our syntactic inference system: it is sound if the
syntactic judgment implies the semantic judgment (all judgments with
a syntactic derivation are semantically true), and complete if the
converse holds (all true judgments can be established by
a syntactic derivation).

\subsection{(Maybe-)Soundness and (in-)completeness}

In a research paper, the perfect way to evaluate a syntactic system of
inference rules is to provide an independent declarative semantics for
it, state the corresponding soundness/completeness results, and prove them.

In real-world implementations of a programming language, the syntactic
system is rarely written down (often, only the algorithm inspired from
the rules is kept), the declarative semantics are only vague
intuitions in the mind of the rule designers, and the statements are
never formulated precisely enough to dream of a proof.

In this imperfect research paper, we have precise rules, and an
independent declarative semantics, and precise statements... but most
proofs are missing. In fact, it is fairly non-obvious that those
proofs exist: as we were doing this precise formalization work we have
discovered that completeness does not hold in presence of constraints,
and even that the system is not principal -- not only the inference
rules, but even the semantics.

This subsection contains the best state of our knowledge and hopes
about the formal status of our inference rules.

\begin{conjectured_lemma}[Type expression soundness]
  $
    \Sigma; \Gamma \vdash \tau : m
    \;\implies\;
    \Sigma; \Gamma \vDash \tau : m
  $
\end{conjectured_lemma}

Remark: it is not obvious that the existential rule is sound, for
example. It proves $\exists \alpha.\,\tau : m$ under a premise
$\alpha : \ind \vdash \tau : m$. The values of $\exists \alpha.\,\tau$
are the union of all the $\ground{\tau[\ground \sigma / \alpha]}$, and
our induction hypothesis tells us that each
$\ground{\tau[\ground \sigma / \alpha]}$ indeed has the mode $m$. But
separability is not at all stable by union: both $\builtin{int}$ and
$\builtin{float}$ are separable, but their union is not. One needs to
argue that either all $\ground{\tau[\ground \sigma / \alpha]}$ are
inhabited by floating-point numbers only, or that none of them contain
floating-point numbers, and this is a subtle property of the structure
of algebraic datatypes.

\begin{conjectured_theorem}[Soundness]
  $
    \Sigma_0 \vdash \sigma \dashv \Sigma
    \;\implies\;
    \Sigma_0 \vDash \sigma \Dashv \Sigma
  $
\end{conjectured_theorem}

\newcommand{\CF}[1]{\mathsf{CF}(#1)}

\begin{fact}[Incompleteness]
  \label{fact:incompleteness}
  There exists a $\tau$ such that $\emptyset; \emptyset \vDash \tau : m$ but
  $\emptyset; \emptyset \nvdash \tau : m$.
\end{fact}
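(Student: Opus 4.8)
The plan is to exhibit one closed type $\tau$ together with a mode $m$, and then to prove the two halves separately: that $\tau$ is semantically at mode $m$, but that no syntactic derivation of $\tau : m$ exists. I would take $m = \sep$ and
$$\tau \;:=\; \guard{\tyeq{\builtin{int}}{\builtin{bool}}}{(\exists \alpha.\, \alpha)},$$
which guards the non-separable type $\exists\alpha.\,\alpha$ by an equation between two distinct base types. The guiding intuition is that the \emph{semantics} knows that $\ground{\builtin{int}}$ and $\ground{\builtin{bool}}$ are different ground types, so the guard is unsatisfiable and $\tau$ is empty, hence trivially separable; whereas the \emph{syntax} measures types only through separability modes, and since $\builtin{int}$ and $\builtin{bool}$ carry exactly the same modes it cannot tell them apart, so it conservatively treats the equation as possibly satisfiable and is left having to prove the impossible $\exists\alpha.\,\alpha : \sep$. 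Strictly, the grammar of Figure~\ref{fig:type-syntax} only allows a type variable on the left of a guard; if one insists on staying within it, the same argument goes through verbatim for the grammar-respecting closed variant $\exists\alpha.\,\guard{\tyeq{\alpha}{\builtin{int}}}{\guard{\tyeq{\alpha}{\builtin{bool}}}{(\exists\beta.\,\beta)}}$, where no ground instantiation of $\alpha$ can satisfy both equations, so I describe the argument on the simpler $\tau$, whose general equation is directly supported by the inference and semantic rules.

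\emph{Semantic direction.} I would unfold $\emptyset;\emptyset \vDash \tau : \sep$: since $\tau$ is closed and the context is empty, it suffices to show $\sat{\sigma}{\ground\tau}{\sep}$ for every $\sigma$. By the guard clause of the ground semantics (read, following Section~\ref{subsec:gadts-reminder}, as ``the body when the equation holds, empty otherwise''), a value $\ground v$ can inhabit $\ground\tau$ only when $\ground{\builtin{int}} = \ground{\builtin{bool}}$; as these are distinct ground types, the set $\{\ground v \mid \sat{\sigma}{\ground v}{\ground\tau}\}$ is empty. The predicate $\operatorname{separable}(\emptyset)$ holds, both of its disjuncts being vacuously true, so $\sat{\sigma}{\ground\tau}{\sep}$ holds and therefore $\emptyset;\emptyset \vDash \tau : \sep$.

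\emph{Syntactic direction.} The key observation is that the built-in types behave as nullary type constructors whose signatures impose no premises, so by the type-constructor rule $\Gamma \vdash \builtin{int} : m$ and $\Gamma \vdash \builtin{bool} : m$ are derivable for every mode $m$ and every $\Gamma$. Consequently the equality judgment $\emptyset \vdash \tyeq{\builtin{int}}{\builtin{bool}}$ is derivable, since for each $m$ both sides have mode $m$ and the required equivalence holds. I would then argue that no rule can conclude $\emptyset \vdash \tau : \sep$. The only candidates are the guard rule and the conversion rule, and since $\Gamma' \geq \emptyset$ forces $\Gamma' = \emptyset$ (comparable contexts share the same variables), the guard rule at a mode $m' \in \{\sep,\deepsep\}$ would require the implication $\emptyset \vdash \tyeq{\builtin{int}}{\builtin{bool}} \Rightarrow \emptyset \vdash \exists\alpha.\,\alpha : m'$; its antecedent holds, but its consequent fails, because deriving $\exists\alpha.\,\alpha : m'$ demands $\alpha : \ind \vdash \alpha : m'$ with $m' \geq \sep$, which is impossible (the variable rule yields only $\alpha : \ind$, and conversion can only lower modes). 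Thus the guard rule at $\sep$ fails, and so does the guard rule at $\deepsep$ followed by conversion; no other rule concludes a guarded type, so $\emptyset;\emptyset \nvdash \tau : \sep$.

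\emph{Main obstacle.} The delicate point is not the arithmetic of either direction but pinning down exactly where the two systems diverge and making the negative claim airtight. On the semantic side one must commit to the interpretation of an unsatisfiable guard as the empty type, the reading intended in Section~\ref{subsec:gadts-reminder}. On the syntactic side, the genuine content is the observation that separability modes are too coarse to distinguish base types that happen to share all their modes, which is precisely what makes $\emptyset \vdash \tyeq{\builtin{int}}{\builtin{bool}}$ derivable; turning ``no rule applies'' into ``no derivation exists'' then requires the small case analysis over the guard and conversion rules given above. Everything else is routine, and the same template (unsatisfiable guard over a non-separable body) is what makes incompleteness a structural, rather than accidental, feature of the system in the presence of constraints.
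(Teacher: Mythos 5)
Your proposal is correct and follows essentially the same route as the paper, which uses the witness $\exists\beta.\,\guard{\tyeq{\builtin{int}}{\builtin{bool}}}{\beta}$ (your type merely moves the existential inside the guard) and likewise argues that the unsatisfiable equation makes the type semantically empty, hence separable, while the mode-based equality judgment cannot distinguish $\builtin{int}$ from $\builtin{bool}$ and so the syntactic guard rule gets stuck on a non-separable body. You fill in the details the paper leaves implicit (including the grammar restriction on guards and the case analysis for non-derivability), and you only omit the paper's second, occurs-check example, which is not needed for the existence claim.
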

\begin{proof}
  Consider the two types
  $
    \tau_1
    \;:=\;
    \exists \beta. \guard {\tyeq {\builtin{int}} {\builtin{bool}}} \beta
  $ and $
    \tau_2
    \;:=\;
    \exists \alpha \beta.\,\guard {\tyeq \alpha {\alpha \times \builtin{int}}} \beta
  $.
  They are semantically separable, but our type system cannot prove
  it. The reason why they are semantically separable is that the equations
  never hold in our semantic model with finite ground types, so the
  constrained type is empty and thus trivially separable.

  We could make our syntactic inference rules more complete by adding
  more equality-reasoning power to the judgment
  $\Sigma; \Gamma \vdash \tyeq {\tau_1} {\tau_2}$. For the first case
  $\tyeq {\builtin{int}} {\builtin{bool}}$, incompatible type
  constructors should not be considered equal. In the second case,
  $\tyeq \alpha {\alpha \times \builtin{int}}$, one could add an
  occurs-check to rule out such recursive types, but note that this
  property, true in our model, may not hold in the real world -- it
  does not in OCaml when \texttt{-rectypes} is used.
\end{proof}

This result suggests that equality-reasoning can be fairly subtle and
that we should not expect syntactic completeness on types with
constraints. We can still hope to have completeness in their absence.

\begin{definition}[Constraint-free]
  A type $\tau$, datatype $A$ or definition block $\sigma$ is
  \emph{constraint-free} if it does not contain any type equality
  constraint ($\guard{\tyeq \alpha {\tau'}} \kappa$). We write
  $\CF{\tau}$, $\CF{A}$ or $\CF{\sigma}$.
\end{definition}

\begin{conjectured_lemma}[Completeness on constraint-free type expressions]
\[
  \CF{\tau}
  \wedge
  \Sigma; \Gamma \vDash \tau : m
  \;\implies\;
  \Sigma; \Gamma \vdash \tau : m
\]
\end{conjectured_lemma}

\begin{conjectured_theorem}[Completeness on constraint-free signatures]
\[
\CF{\sigma}
\wedge
\Sigma_0 \vdash \sigma \dashv \Sigma
\;\implies\;
\Sigma_0 \vDash \sigma \Dashv \Sigma
\]
\end{conjectured_theorem}

\subsection{Constraint-free principality}
\label{subsec:principality}

Unfortunately, there is more bad news to come for equality constraints.

\begin{fact}[Semantic non-principality]
  \label{fact:non-principal}
  There exists a parametrized datatype $A$ with
  $\sat \emptyset A t(\Gamma_1)$
  and
  $\sat \emptyset A t(\Gamma_2)$,
  but
  $\unsat \emptyset A t(\min(\Gamma_1, \Gamma_2))$.
\end{fact}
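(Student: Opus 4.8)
The plan is to take as witness the type-synonym datatype $A := \guard{\tyeq\alpha\beta}\beta$, declared as $t(\alpha,\beta) := A$, and to set $\Gamma_1 := (\alpha:\ind,\beta:\sep)$ and $\Gamma_2 := (\alpha:\sep,\beta:\ind)$. Because the order on contexts is pointwise, $\min(\Gamma_1,\Gamma_2) = (\alpha:\ind,\beta:\ind)$, and under the contravariant signature order this is the join (least upper bound) of $t(\Gamma_1)$ and $t(\Gamma_2)$, here coinciding with the top, most permissive signature. Since the set of valid signatures is downward closed (a more permissive signature quantifies over more valuations, so is harder to validate), exhibiting two valid signatures $t(\Gamma_1), t(\Gamma_2)$ whose join is invalid rules out any valid common upper bound, hence any principal (maximal valid) signature --- exactly the failure of principality the Fact records.

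First I would unfold the mode-signature judgment of Figure~\ref{fig:ground-semantics}. Fix a valuation $\ground\gamma$ with $\sat\emptyset{\ground{\gamma(\alpha)}}{m_\alpha}$ and $\sat\emptyset{\ground{\gamma(\beta)}}{m_\beta}$, and write $g_\alpha := \ground{\gamma(\alpha)}$ and $g_\beta := \ground{\gamma(\beta)}$. The type-synonym clause identifies the inhabitants of $\ground{\gamma(A)}$ with those of the ground type $\guard{\tyeq{g_\alpha}{g_\beta}}{g_\beta}$, and the guard clause makes this type empty when $g_\alpha \neq g_\beta$ (hence trivially separable) and equal to $g_\beta$ when $g_\alpha = g_\beta$. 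Consequently $\sat\emptyset A {t(\alpha:m_\alpha,\beta:m_\beta)}$ is equivalent to the single implication: for every such $\ground\gamma$, if $g_\alpha = g_\beta$ then $\sat\emptyset{g_\beta}\sep$.

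With this reformulation the two positive checks are immediate. For $\Gamma_1$ the hypothesis $m_\beta = \sep$ already gives $\sat\emptyset{g_\beta}\sep$, so the implication holds unconditionally; for $\Gamma_2$, whenever the guard fires we have $g_\beta = g_\alpha$ and $m_\alpha = \sep$ yields $\sat\emptyset{g_\alpha}\sep$, hence $\sat\emptyset{g_\beta}\sep$. Thus $\sat\emptyset A {t(\Gamma_1)}$ and $\sat\emptyset A {t(\Gamma_2)}$ both hold. For the negative part I would instantiate $g_\alpha = g_\beta := \exists\delta.\delta$: both entries trivially satisfy mode $\ind$, the equation $g_\alpha = g_\beta$ holds so the guard fires, yet $\exists\delta.\delta$ is not separable (the remark accompanying Figure~\ref{fig:ground-semantics}). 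Hence $\ground{\gamma(A)}$ is non-separable and $\unsat\emptyset A {t(\min(\Gamma_1,\Gamma_2))}$, completing the argument.

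The only genuinely delicate point is reading the guard clause correctly: $\guard{\tyeq{g_\alpha}{g_\beta}}{g_\beta}$ must be empty exactly when the ground equation fails (the behaviour dictated by Section~\ref{subsec:gadts-reminder} and already exploited in the proof of Fact~\ref{fact:incompleteness}), so that a failed equality contributes no inhabitants rather than all of them. Everything else is a direct computation; the contravariance of the signature order enters only through the observation in the first paragraph that the pointwise-minimal context names the most permissive --- and here invalid --- signature, which is precisely what turns two incomparable valid signatures into a genuine failure of principality.
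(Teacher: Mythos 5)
Your proof is correct and takes essentially the same route as the paper: the paper uses the symmetric witness $A := \guard{\tyeq\alpha\beta}{\alpha}$ with the same two contexts $\Gamma_1,\Gamma_2$ and likewise concludes that the pointwise minimum $(\alpha:\ind,\beta:\ind)$ is not admissible. You supply more detail than the paper does --- notably the explicit non-separable instantiation $g_\alpha = g_\beta = \exists\delta.\,\delta$ for the negative half, which the paper leaves implicit --- and your reading of the guard clause (empty when the ground equation fails) is exactly the one the paper's prose and its proof of Fact~\ref{fact:incompleteness} rely on.
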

\begin{proof}
  Over two parameters $\alpha, \beta$, take
  $A := \guard{\tyeq \alpha \beta}\alpha$.
  Both $\Gamma_1 := \alpha:\ind, \beta:\sep$ and
  $\Gamma_2 := \alpha:\sep, \beta:\ind$ are admissible signatures for
  $A$.
  In particular, our inference rules can verify them: any $\Gamma'$ with
  $\Gamma' \vdash \alpha = \beta$ has
  $(\alpha:\sep, \beta:\sep)$.
  However, the minimum signature $\alpha:\ind, \beta:\ind$ is not valid for $A$.
\end{proof}

Note the example in the above proof can be represented as an OCaml
datatype as follows.
\begin{minted}{ocaml}
type (_, _) strange_eq =
  | Strange_refl : 'a -> ('a, 'a) strange_eq [@@unboxed]
\end{minted}

This results shows that some limitations of the system are not due to
our typing rules, but a fundamental lack of expressiveness of the
current modes and mode signatures as a specification/reasoning
language. We would need a richer language of modes, keeping track of
equalities between types, to hope to get a principal system.

\begin{lemma}
  If $\Sigma; \Gamma_1 \vdash \tau : m$
  and $\Sigma; \Gamma_2 \vdash \tau : m$
  and $\CF{\tau}$
  then $\Sigma; \min(\Gamma_1, \Gamma_2) \vdash \tau : m$
\end{lemma}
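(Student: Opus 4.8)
The plan is to reduce to \emph{cut-free} derivations and then proceed by structural induction on $\tau$. By the Cut elimination lemma (Lemma~\ref{lem:cut-elimination}) I may assume that the two given derivations of $\Sigma; \Gamma_1 \vdash \tau : m$ and $\Sigma; \Gamma_2 \vdash \tau : m$ use the conversion rule only folded into the axioms, i.e.\ in the primitive variable form of Lemma~\ref{lem:cut-elimination}. In this form the rules are syntax-directed: the last rule of each derivation is determined by the head constructor of $\tau$, so both derivations necessarily end with the \emph{same} rule. The key structural observation is that, since $\CF{\tau}$ holds, $\tau$ contains no equality guard, and hence the equality-guard rule --- the one rule that breaks closure under $\min$, as witnessed by Fact~\ref{fact:non-principal} --- never occurs. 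Moreover every immediate subterm of a constraint-free type is again constraint-free, so the induction hypothesis always applies to subcomponents.

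The routine cases follow easily. For a variable, both derivations give $(\alpha : m_1) \in \Gamma_1$ and $(\alpha : m_2) \in \Gamma_2$ with $m_1, m_2 \geq m$; since $\min(m_1, m_2) \geq m$ and $(\alpha : \min(m_1, m_2))$ is exactly the binding of $\alpha$ in $\min(\Gamma_1, \Gamma_2)$, the primitive variable rule yields $\min(\Gamma_1, \Gamma_2) \vdash \alpha : m$. For the concrete type-formers ($\tau \to \kappa$, products) and for a type constructor $t(\tau_1, \dots, \tau_n) \in \Sigma$, each derivation reduces to premises on the subcomponents at identical modes ($\mcomp m \ind$ for the arguments of arrows and products, $\mcomp m {m_i}$ for the $i$-th argument of $t$). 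Applying the induction hypothesis to each subcomponent produces the corresponding premise in context $\min(\Gamma_1, \Gamma_2)$, and re-applying the same rule concludes. The existential case $\exists \alpha. \tau$ is equally direct: both derivations put $\alpha$ in the context at the \emph{fixed} mode $\ind$, so the induction hypothesis immediately gives $\min(\Gamma_1, \Gamma_2), \alpha : \ind \vdash \tau : m$, and the existential rule concludes.

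The one genuinely delicate case, and the main obstacle, is the universal type $\forall \alpha. \tau$. Here the rule permits an \emph{arbitrary} mode for the bound variable, so the two derivations may choose different modes $n_1$ and $n_2$, giving $\Gamma_1, \alpha : n_1 \vdash \tau : m$ and $\Gamma_2, \alpha : n_2 \vdash \tau : m$; the induction hypothesis cannot be applied directly because the two extended contexts disagree on $\alpha$. I would resolve this using the Monotonicity lemma (Lemma~\ref{lem:monotonicity}): setting $n := \max(n_1, n_2)$, monotonicity raises each derivation to $\Gamma_1, \alpha : n \vdash \tau : m$ and $\Gamma_2, \alpha : n \vdash \tau : m$ (the mode of $\alpha$ only increases, from $n_i$ up to $n \geq n_i$, and $m$ is left unchanged). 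Now both extended contexts agree on $\alpha$, so the induction hypothesis yields $\min(\Gamma_1, \Gamma_2), \alpha : n \vdash \tau : m$ --- using that $\min(\Gamma_1, \alpha : n,\ \Gamma_2, \alpha : n)$ equals $\min(\Gamma_1, \Gamma_2), \alpha : n$ --- and the $\forall$ rule, choosing mode $n$ for the bound variable, concludes $\min(\Gamma_1, \Gamma_2) \vdash \forall \alpha. \tau : m$. This completes the induction, and the statement is exactly the constraint-free principality counterpart of the negative Fact~\ref{fact:non-principal}.
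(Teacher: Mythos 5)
Your proof follows essentially the same route as the paper's: invoke the Cut Elimination Lemma~\ref{lem:cut-elimination} to make the derivations syntax-directed, so both end in the same rule, then induct with the guard rule excluded by $\CF{\tau}$. You additionally work out the one non-immediate case ($\forall\alpha.\tau$, where the bound variable's modes may differ, fixed via Monotonicity) that the paper's one-line proof leaves implicit, and that resolution is correct.
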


\begin{proof}
  This proof is done by induction on the two derivations at once, but
  we need to use the Cut Elimination Lemma~\ref{lem:cut-elimination}
  first to be able to assume, by inversion/syntax-directedness,
  that the rules on both sides are the same.
\end{proof}

\begin{corollary}[Constraint-free principality]
  ~\\
  If $\Sigma; \Gamma \vdash \tau : m$ with $\CF{\tau}$,
  then there exists a minimal context $\Gamma_{\mathsf{min}}$
  such that $\Sigma; \Gamma_{\mathsf{min}} \vdash \tau : m$.
  ~\\
  If $\Sigma_0 \vdash \sigma \dashv \Sigma$ with $\CF{\sigma}$,
  then there exists a minimal signature $\Sigma_{\mathsf{min}}$
  such that $\Sigma_0 \vdash \sigma \dashv \Sigma_{\mathsf{min}}$.
\end{corollary}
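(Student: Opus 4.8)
The plan is to treat both statements uniformly as instances of the principle that a nonempty, finite, and suitably \emph{combination-closed} subset of a lattice has an extreme element. Finiteness is immediate: there are only three modes, so over any fixed set of type variables (resp.\ any fixed block of type constructors) there are only finitely many contexts (resp.\ mode signatures), and the orders of Figure~\ref{fig:mode-signatures} make them finite lattices. The combination-closure will come from the preceding Lemma (meet-closure of derivability under $\min$ of contexts), which is exactly where the hypothesis $\CF{\tau}$ is consumed; Monotonicity (Lemma~\ref{lem:monotonicity}) provides the second tool.

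For the context statement, fix the variable set to be the domain of the given $\Gamma$ and consider $S := \{\Gamma' \mid \Sigma; \Gamma' \vdash \tau : m\}$ over that domain. The set $S$ is finite and nonempty (it contains $\Gamma$). By the preceding Lemma, if $\Gamma_1, \Gamma_2 \in S$ then $\min(\Gamma_1, \Gamma_2) \in S$, so $S$ is closed under binary meet; being finite and nonempty, it therefore has a least element $\Gamma_{\mathsf{min}}$, obtained as an iterated binary $\min$ that stays in $S$. This $\Gamma_{\mathsf{min}}$ is the claimed minimal context: the weakest assumption on the free variables of $\tau$ under which $\tau : m$ is still derivable.

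For the signature statement I would first reduce the block judgment to type-expression judgments: by the datatype-level rules, $\Sigma_0, \Sigma_{\mathsf{block}}; \vec{\alpha_i} : \vec{m_i} \vdashdecl A_i$ is either vacuous (boxed variants and records) or amounts to $\Sigma_0, \Sigma_{\mathsf{block}}; \vec{\alpha_i} : \vec{m_i} \vdash \tau_i : \sep$ for the body type $\tau_i$ of an unboxed definition or synonym, and $\CF{\sigma}$ guarantees $\CF{\tau_i}$. I would then show that the set of valid output signatures is closed under the least-demanding combination. Given two valid signatures $\Sigma^a$ and $\Sigma^b$, let $\Sigma^c$ be their pointwise $\min$-of-modes; because parameters sit in contravariant position, $\Sigma^c$ is the $(\le)$-join, so $\Sigma^a \le \Sigma^c$ and $\Sigma^b \le \Sigma^c$. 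Using Monotonicity I can raise the environments of both derivations from $\Sigma_0, \Sigma^{a}$ and $\Sigma_0, \Sigma^{b}$ to the common $\Sigma_0, \Sigma^c$, keeping the mode at $\sep$; this yields two derivations of $\tau_i : \sep$ sharing the \emph{same} signature environment $\Sigma_0, \Sigma^c$ but differing only in their contexts $\vec{\alpha_i} : \vec{m_i^a}$ and $\vec{\alpha_i} : \vec{m_i^b}$. Now the preceding Lemma applies at the fixed environment $\Sigma_0, \Sigma^c$ and combines the contexts into $\vec{\alpha_i} : \min(\vec{m_i^a}, \vec{m_i^b})$, which are exactly the modes of $\Sigma^c$. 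Hence $\Sigma_0 \vdash \sigma \dashv \Sigma^c$, establishing join-closure; with finiteness and nonemptiness this gives a $(\le)$-greatest valid signature $\Sigma_{\mathsf{min}}$, the one imposing the fewest requirements on the parameters.

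The main obstacle is the double role of $\Sigma_{\mathsf{block}}$ in the block rule: it appears both in the environment (to check recursive constructor references) and as the source of the parameter-mode contexts, so the valid-signature set is not obviously monotone in either direction and the context-level Lemma does not apply directly. The key move that unblocks the argument is to harmonize the environment \emph{first} --- the easy, monotone direction, since $\Sigma^c$ is less demanding than both $\Sigma^a$ and $\Sigma^b$ --- and only then to combine the contexts at a now-fixed environment. I would also stress that constraint-freeness is used exactly once, through the preceding Lemma, and is essential: Fact~\ref{fact:non-principal} exhibits a constrained datatype with two incomparable admissible signatures whose meet is invalid, so neither statement extends to types carrying equality guards. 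Finally, note that ``minimal'' here means minimal \emph{requirements}: for contexts this is the $(\le)$-least element, while for signatures the contravariance of parameters makes it the $(\le)$-greatest element.
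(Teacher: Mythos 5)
Your proof is correct and follows the route the paper intends: the corollary is stated there without proof, being read off directly from the preceding Lemma (closure of derivability under $\min$ of contexts, which is where constraint-freeness is consumed) together with the finiteness of the mode lattice over a fixed set of variables. Your treatment of the signature half --- first harmonizing the environments of the two derivations via Monotonicity (Lemma~\ref{lem:monotonicity}) and only then applying the preceding Lemma to combine the parameter contexts --- supplies a step the paper leaves entirely implicit, and it is exactly what is needed given the double role of $\Sigma_{\mathsf{block}}$ in the block rule.
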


\end{document}